\patchcmd{\appendices}{\quad}{:\quad}{}{}
\newcommand{\sign}{\mathrm{sign}}
\newcommand{\Var}{\mathrm{Var}}
\newcommand{\MSE}{\mathrm{MSE}}
\newtheorem{lemma}{Lemma}
\begin{document}

\title{Accelerating Quantum Computations of Chemistry Through Regularized Compressed Double Factorization}%

\date{06/04/2024}%

\author{Oumarou Oumarou}
\affiliation{
Covestro Deutschland AG, Leverkusen 51373, Germany
}

\author{Maximilian Scheurer}
\affiliation{
Covestro Deutschland AG, Leverkusen 51373, Germany
}

\author{Robert M. Parrish}
\email{rob.parrish@qcware.com}
\affiliation{
QC Ware Corporation, Palo Alto, CA 94301, USA
}

\author{Edward G. Hohenstein}
\affiliation{
QC Ware Corporation, Palo Alto, CA 94301, USA
}

\author{Christian Gogolin}
\email{christian.gogolin@covestro.com}
\affiliation{
Covestro Deutschland AG, Leverkusen 51373, Germany
}

\begin{abstract}
We propose the regularized compressed double factorization (RC-DF) method to classically compute compressed representations of molecular Hamiltonians that enable efficient simulation with noisy intermediate scale (NISQ) and error corrected quantum algorithms.
We find that already for small systems with 12 to 20 qubits, the resulting NISQ measurement scheme reduces the number of measurement bases by roughly a factor of three and the shot count to reach chemical accuracy by a factor of three to six compared to truncated double factorization (DF) and we see order of magnitude improvements over Pauli grouping schemes.
We demonstrate the scalability of our approach by performing RC-DF on the Cpd
I species of cytochrome P450 with 58 orbitals and find that using the resulting compressed Hamiltonian cuts the run time of qubitization and truncated DF based error corrected algorithms almost in half and even outperforms the lambda parameters achievable with tensor hypercontraction (THC) while at the same time reducing the CCSD(T) energy error heuristic by an order of magnitude.
\end{abstract}

\maketitle

The run time of algorithms on most of today's noisy intermediate scale (NISQ) hardware platforms is largely independent of the anyway shallow circuit depth.
Instead, it is mostly a function of the number of distinct circuits that need to be evaluated and the number of repeated circuit executions because re-programming the quantum device to perform a different circuit and the time for measurement and qubit reset dominate the overall circuit execution time \cite{brien2022purification}.

Many NISQ algorithms such as the variational quantum eigensolver (VQE) \cite{peruzzo2014variational} are essentially methods to reduce circuit depth at the expense of requiring many repetitions, also called shots.
In a similar fashion, error mitigation techniques \cite{o2021error, cai2022mitigation} create a tolerance for the noise of NISQ devices by further increasing the number of shots needed to obtain a final result.
The total number of shots is thus often the limiting factor on the path towards quantum advantage.

This is a particularly pressing issue in in quantum chemistry simulation.
Here, molecular Hamiltonians, which in their second-quantized form have $O(n^4)$ terms, where $n$ is the number of spatial orbitals, need to be measured with very high accuracy.
Naive measurement schemes require an extremely fast growing number of distinct observables and total number of shots \cite{huggins2021efficient} for reaching the required accuracy.

Methods to cope with this problem fall in two broad classes.
First, methods \cite{verteletskyi2020measurement, fischer2022ancilla, miller2022hardware} which, starting from a decomposition of the observable into Pauli operators, group or otherwise combine these Pauli operators into sets that are jointly measurable with no or only minimal increase in circuit depth. 
Second, methods which yield a compressed and possibly approximate representation of the original Hamiltonian in the form of a tensor contraction.
For fermionic second-quantized Hamiltonians, these are mainly density fitting \cite{whitten1973densityfitting}, tensor hypercontraction (THC) \cite{hohenstein2012tensor, goings2022reliably, berry2019qubitization}, and double factorization (DF) \cite{hohenstein2022efficient, cohn2021cdf} (see \cite{cohn2021cdf} for a comparison).
While the Pauli grouping methods are applicable to general qubit Hamiltonians, the second class of methods typically yields better performance when applicable \cite{huggins2021efficient}.

These compressed representations also enable drastic resource reductions in leading fault tolerant algorithms for the simulation of chemistry based on linear combinations of unitaries (LCU) and qubitization \cite{babbush2018encoding,berry2019qubitization,burg2021catalysis,lee2021evenmore,goings2022reliably}.
Here run time is mainly a function of the so-called lambda parameter.
Its precise definition depends on the algorithm and will be discussed later, but it can be thought of as a norm-like quantity that depends on the magnitude of the coefficients of the representation of the Hamiltonian.
THC typically yields lower lambda parameters than existing DF schemes.
The fact that some tensors in the THC decomposition are non-square and non-unitary causes other overheads and complications \cite{lee2021evenmore} which does not make THC a viable option for typical NISQ quantum algorithms.

In contrast, explicit double factorization (X-DF) and compressed double factorization (C-DF) \cite{parrish2019quantum, loaiza2022reducing, yen2021cartan, choi2023fluid, izmaylov2019unitary} naturally yield a NISQ-friendly measurement scheme that only requires a linear depth orbital/Givens rotation circuit before the final measurements, is compatible with particle number post selection, and also an LCU representation of the Hamiltonian suitable for error corrected algorithms based on qubitization \cite{motta2021low}.

The X-DF measurement scheme reduces the number of distinct measurement bases to at most $n(n+1)/2$ and drastically decreases the number of shots to reach a target accuracy, when compared to Pauli-based schemes.
The number of bases can be further reduced by truncating the X-DF representation of the Hamiltonian, thereby making the representation approximate.
This can reduce the required number of shots, but the error resulting from the now approximate representation of the Hamiltonian quickly outweighs this.
C-DF is designed to overcome this issue by performing a tighter least-squares numerical tensor fitting of the molecular Hamiltonian to truncated double-factorized form.
By lifting a rank constraint in the equation defining the X-DF Hamiltonian and using the resulting additional freedom to improve the representation of the molecular Hamiltonian by means of parameter optimization starting from a truncated X-DF guess, it achieves lower approximation errors than truncated X-DF. However, when attempting practical deployment of C-DF in the context of quantum algorithms, one encounters an additional major barrier: the optimization of the C-DF tensor fitting to minimize least squares error does not consider the variance properties of the resulting representation. In practice, this means that the variance of the resulting energy estimator can erratically fluctuate and can be orders of magnitude higher than the variance of the X-DF energy estimator and the approximation error of both X-DF and C-DF.

In this work we propose the regularized compressed double factorization method (RC-DF) to fix this.
RC-DF uses the same functional form of the compressed Hamiltonian as C-DF but it adds a regularization term to the C-DF cost function that is used when optimizing the parameters of the compressed representation \footnote{When working out the implications of RC-DF for fault tolerant quantum algorithms, we became aware that a similar erratic behavior of the $\lambda$ parameter of THC had been observed in \cite{goings2022reliably} and an L1 regularization has been proposed as a cure there.}.
The regularization term stabilizes the optimization and reduces the variance of the resulting NISQ energy estimator as well as the $\lambda$ parameter determining the resources of fault tolerant quantum algorithms.
We find that RC-DF consistently outperforms both previous double factorization schemes in terms of variance, approximation error, and lambda parameter and even yields lambda parameters lower than THC.

\section{Comparison of factorization methods}
We start from the well known form of the second-quantized electronic structure Hamiltonian

\begin{align}
\hat{H}
=
E_{\rm c} 
& +
\sum_{pq}
(p|\hat{h}_{\rm c}|q)
\hat{E}_{pq} \nonumber \\
& +
\frac{1}{2}
\sum_{pqrs}
(pq|rs)
\left(
\hat{E}_{pq}
\hat{E}_{rs}
-
\delta_{qr}
\hat{E}_{ps}
\right), \label{eq:2nd_quantized_h}
\end{align}
where
\begin{align*}
    (p|\hat{h}_{\rm c}|q)&=\int\phi_p^*(r)(-\frac{1}{2}\nabla^2(r)-\sum_m\frac{Z_m}{r-r_m})\phi_q(r)dr\\
    (pq|rs)&=\iint\phi_p^*(r_1)\phi_q(r_2)\frac{1}{r_{12}}\phi(r_1)_r^*\phi(r_2)_sdr_1dr_2
\end{align*}
are the symmetric one-electron integrals and the real and 8-fold symmetric two-electron integrals with $Z_m$ and $r_m$ the charges and positions of the nuclei and $\phi$ the spacial molecular orbitals, and $\hat{E}_{pq} \coloneqq \hat p^\dagger \hat q + \hat{\bar p}^\dagger \hat{\bar q}$ is the singlet excitation operator.
The exact X-DF representation of the Hamiltonian is determined by diagonalizing the modified one-electron integrals tensor $\mathcal{F}_{pq}$ and doubly diagonalizing the two-electron integrals tensor to obtain
\begin{align}
    \mathcal{F}_{pq}&\coloneqq(p|\hat{h}_c|q)-\frac{1}{2}\sum_r(pr|qr) + \sum_r(pq|rr)\\
    &=\sum_k U^\varnothing_{pk}\, \mathcal{F}^\varnothing_k \, U^\varnothing_{qk}\label{eq:def_mathcal_F}
\end{align}
and
\begin{align}
    (pq|rs) &= \sum_{t=1}^{n_t} V^t_{pq}\, g_{t} \,V^t_{rs}\\  &= \sum_{t=1}^{n_t} \sum_{kl=1}^n U^t_{pk}\,U^t_{qk}\,Z^t_{kl}\,U^t_{rl}\,U^t_{sl},\label{eq:df_two_body_integrals}
\end{align}
where the $U^t_{pk}$ result from diagonalizing the $V^t_{pq} = \sum_k^n U^t_{pk} \Lambda_k U^t_{qk}$
and consequently $Z^t_{kl}=\Lambda_kg^t\Lambda_l$ is, for every $t$, a symmetric outer product, hence of rank one, and the $U^t_{pk}$ are unitary (in fact without loss of generality special orthogonal).
The second factorization is possible whenever $(pq|rs)$ is real and 8-fold symmetric (as is always the case for non-relativistic Coulomb repulsion integrals), as this is enough to ensure that the $V^t_{pq}$ are not only orthogonal but also real and symmetric for every $t$ (see Lemma~\ref{lemma:8_fold_symmetry_implies_symmetry} in Appendix~\ref{app:8_fold_symmetry_implies_symmetry}).
With $n_t$ equal to the maximum number $n(n+1)/2$ of non-zero eigenvalues of $(pq|rs)$ the Hamiltonian can then be written exactly (see Appendix~\ref{app:technical_proofs} for the full derivation) as
\begin{widetext}
\begin{align}
\begin{aligned}
\hat{H}
=&~
\mathcal{E}
-
\frac{1}{2}
\sum_k
\mathcal{F}^\varnothing_{k}
U^\varnothing
\left(
\hat{Z}_{k}
+
\hat{Z}_{\bar{k}}
\right)
(U^\varnothing)^\dagger
\\&+
\frac{1}{8}
\sum_{t=1}^{n_t} \sum_{kl=1}^n
Z^t_{kl}
U^t
\left(
\hat{Z}_{k}
\hat{Z}_{l}
-
\delta_{kl}
+
\hat{Z}_{k}
\hat{Z}_{\bar{l}}
+
\hat{Z}_{\bar{k}}
\hat{Z}_{l}
+
\hat{Z}_{\bar{k}}
\hat{Z}_{\bar{l}}
-
\delta_{\bar{k}\bar{l}}
\right)
(U^t)^\dagger, 
\end{aligned}\label{eq:df_hamiltonian}
\end{align} 
\end{widetext}
where
\begin{equation}
\mathcal{E} 
= 
E_{\rm c} 
+ 
\sum_p 
(p|\hat{h}_{\rm c}|p)
+
\frac{1}{2}
\sum_{pq}
(pp|qq)
-
\frac{1}{4}
\sum_{pq}
(pq|pq),
\end{equation}
is independent of the state, and $U^\varnothing$ and $U^t$ rotate the orbitals for each $t$ according to $U^t_{pk}$ (See Fig.~\ref{fig:rcdf-nisq-circuit}), $\hat Z_k, \hat Z_{\bar k}$ are respectively pauli operators on qubit $2k$ and $2k+1$, $k \in [0,n-1]$.

If the sum over $t$ is ordered according to $|g_t|$, the Hamiltonian can be approximated with a truncated X-DF representation with fewer terms (also called leafs).

In (R)C-DF the rank-one constraint on the $Z^t_{kl}$ is lifted and they are allowed to be arbitrary symmetric matrices.
The orbital rotations $U^t_{pk}$ and coefficients $Z^t_{kl}$ are then obtained using a two-step gradient based optimization procedure by first exponentially parametrizing the orbital rotations $U^t_{pq} \coloneqq \exp(X^t)_{pq}$ via anti-symmetric generators $X^t_{pq}$ and then minimizing the squared Frobenius norm (in \cite{burg2021catalysis} this is called the incoherent error)
\begin{equation} \label{eq:c-df_cost_optimization}
    \frac{1}{2} \Bigg|\Bigg|\underbrace{\sum_{t=1}^{n_t}\sum_{kl=1}^n U^t_{pk}\,U^t_{qk}\,Z^t_{kl}\,U^t_{rl}\,U^t_{sl} - (pq|rs)}_{\Delta_{pqrs}}\Bigg|\Bigg|_{\mathcal{F}}^2
\end{equation}
of the difference between the left and right hand side of \eqref{eq:df_two_body_integrals} for some pre-set $n_t \leq n\,(n+1)/2$
starting from a truncated X-DF initial guess (for details see \cite{parrish2019quantum}).

Irrespective of whether a Hamiltonian representation of the form \eqref{eq:df_hamiltonian} was found via X-DF or (R)C-DF, the energy can be then be measured by means of quantum circuit with a linear gate depth overhead of the form shown in Fig.~\ref{fig:rcdf-nisq-circuit}. 
\begin{figure}[tb]
    \centering
    \includegraphics[width=\linewidth]{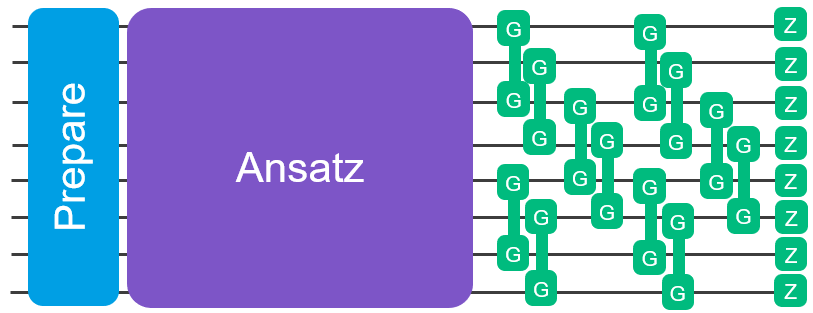}
    \caption{DF measurement scheme according to the LCU decomposition in \eqref{eq:df_hamiltonian}. From each $U^\varnothing$ and $U^t$ the parameters of a square shaped fabric of givens gates $G$ can be computed. The results of $\hat Z$ and $\hat Z \otimes \hat Z$ measurements in these $n_t+1$ distinct bases can then be contracted against the $\mathcal{F}_k^\varnothing$ and $Z^t_{kl}$ tensors to obtain an energy estimator.}
    \label{fig:rcdf-nisq-circuit}
\end{figure}

The most favorable resource estimates of quantum algorithms for error corrected simulation of chemistry with qubitization have been obtained with THC \cite{lee2021evenmore,goings2022reliably}.
In this context THC approximates the two-body part of the Hamiltonian according to
\begin{equation}
    (pq|rs) \approx \sum_{kl=1}^M \chi^k_p \, \chi^k_q \, \zeta_{kl} \, \chi^l_{r} \, \chi^l_{s} ,
\end{equation}
with $\sum_k^M \chi^k_p \chi^k_q \leq 1$ and $M \leq n^2$ the THC rank.
Also here the symmetric $\zeta_{kl}$ and rectangular $\chi^k_p$ are found by means of minimizing the Frobenius norm error.
When DF is used with qubitization \cite{berry2019qubitization,burg2021catalysis} it is usually presented and performed according to
\begin{align} \label{eq:df_subm_of_squares}
    (pq|rs) &= \sum_{t=1}^{n^2} L^t_{pq} L^t_{rs} \\
    &= \frac{1}{2} \sum_{t=1}^{n^2} U^t \Big(\sum_k^n \lambda^t_k \, (\hat Z_k + \hat Z_{\bar{k}})\Big)^2 (U^t)^\dagger,
\end{align}
where the $L^t_{pq}$ can be found with Cholesky decomposition or eigen decomposition as $L^t_{pq} = \sqrt{g_t} V^t_{pq}$ and the scalar $\lambda^t_k = \sqrt{g_t} \Lambda_k$ are the eigenvectors and the $U^t$ the diagonalizing unitaries of the $L^t_{pq}$ (here $g_t$, $V^t_{pq}$, and $\Lambda_k$ refer to the quantities introduced in the context of X-DF above).
For a NISQ measurement scheme such as that in Fig.~\ref{fig:rcdf-nisq-circuit} it makes no sense to partially discard a leaf (because the measurement data is available for all contributions from a leaf), but in qubitization truncation can be done on the level of setting individual $\lambda^t_k$ equal to zero.

The (R)C-DF form of the Hamiltonian, in which $Z^t_{kl}$ is no longer rank one, can be re-cast as a sum of operator squares similar to \eqref{eq:df_subm_of_squares}.
By taking the matrix square root $W^{t}_{kl} \coloneqq (\sqrt{Z^t})_{kl}$ so that $Z^t_{kl} = \sum_{i}^n W^{t}_{ki} W^{t}_{li}$ (we use the implementation in \href{https://docs.scipy.org/doc/scipy/reference/generated/scipy.linalg.sqrtm.html}{scipy} of the algorithm from \cite{deadman2012blocked}, which works also for non-positive $Z^t_{kl}$ matrices, in which case the $W^{t}_{ki}$ come out complex, which is compatible with the scheme from \cite{burg2021catalysis}) one can write
\begin{align}
\begin{aligned}
    (pq|rs) = \sum_{t}^{n_t} \sum_{i}^{n} &\left(\sum_k U^t_{pk}\,U^t_{qk}\,W^t_{ki}\right) \\
    &\times\left(\sum_l W^t_{il}\,U^t_{rl}\,U^t_{sl}\right).\label{eq:cdf_as_subm_of_squares}
\end{aligned}
\end{align}
This allows one to run the algorithm of \cite{burg2021catalysis} with (R)C-DF Hamiltonians as input.

The block encoding of the qubitization method needs the Hamiltonian in the form of an LCU.
The number of ancillary qubits and $T$ gates is then determined by the number of terms of the LCU and a sort of normalization factor, called the lambda parameter \cite{burg2021catalysis,lee2021evenmore}.
The result of (truncated) X-DF, C-DF, and RC-DF is itself an LCU with a lambda factor 
\begin{align}
    \label{eq:lambda_lcu}
    \lambda^{\mathrm{LCU}}_{\mathrm{DF}} &\coloneqq
    \sum_k^{n} |\mathcal{F}^\varnothing_k| +
    \sum_{t=1}^{n_t}\left( \sum_{k < l}^{n} |Z^t_{kl}|
    +
    \frac{1}{4}\sum_k^n |Z^t_{kk}|\right) .
\end{align}
Alternatively, because of \eqref{eq:cdf_as_subm_of_squares}, one can use the algorithm from \cite{burg2021catalysis}, which achieves a contribution from the two-body part of the Hamiltonian to lambda of $1/4\, \sum_t \|(L^t_{pq})_{pq}\|_1^2$ for Hamiltonians of the form \eqref{eq:df_subm_of_squares} (where $\|\cdot\|_1$ is the Schatten $1$-norm) and using \eqref{eq:cdf_as_subm_of_squares} one can thus obtain 
\begin{align}
    \lambda^{\mathrm{Burg}}_{\mathrm{DF}} &\coloneqq \sum_k^{n} |\mathcal{F}^\varnothing_k| + \frac14 \, \sum_{t}^{n_t} \sum_{i}^{n} \Big\|(\sum_k^n U^t_{pk}U^t_{qk} W^t_{ki})_{pq}\Big\|_1^2\\
    &= \sum_k^{n} |\mathcal{F}^\varnothing_k| + \frac14 \, \sum_{t}^{n_t} \sum_{i}^{n} \left( \sum_k^n |W^{t}_{ki}| \right)^2.
\end{align}
The difference between $\lambda^{\mathrm{Burg}}_{\mathrm{DF}}$ and $\lambda^{\mathrm{LCU}}_{\mathrm{DF}}$ stems from the different LCU representation of the Hamiltonian. The latter uses the equation in \eqref{eq:df_hamiltonian} which is evidently an LCU and subsequently its $||.||_1$ is \eqref{eq:lambda_lcu}. The former however uses the algorithm and LCU from (13) of \cite{burg2021catalysis}.
Finallyl, for THC, Lee et al.~\cite{lee2021evenmore} have obtained a lambda of
\begin{align}
    \lambda^{\mathrm{Lee}}_{\mathrm{THC}} &\coloneqq \sum_k^{n} |\mathcal{F}^\varnothing_k| + \frac12 \, \sum_{kl}^{M} |\zeta_{kl}| .
\end{align}

The precise run times of the algorithms corresponding to the different lambda values differ and depend on factorization-specific quantities such as the THC rank $M$ but they all scale like their respective lambda divided by the allowable phase estimation energy error times the sum of run times of certain circuit primitives plus logarithmic overheads.
The differences between the lambda values have turned out to outweigh the influence of other factors when comparing algorithm run times for similar overall target accuracies \cite{lee2021evenmore}.

\section{Regularized compressed double factorization}
While C-DF allows to reduce the number of leafs needed for good accuracy from close to $n^2$ for X-DF to roughly linear in $n$ while maintaining an approximate but sufficiently accurate representation of the Hamiltonian, it turns out that the optimization of C-DF often converges to $Z^t_{kl}$ tensors with very large entries.
This is problematic since the variance of the NISQ energy estimator and both lambda parameters grow with the number and magnitude of $|Z^t_{kl}|$ values.
To solve this issue, we propose to add to the C-DF cost function from \eqref{eq:c-df_cost_optimization} a regularization term penalizing large $|Z^t_{kl}|$ via a tensor of weights $\rho_{tkl} \geq 0$
\begin{equation} \label{eq:regularization_term}
    \frac12 \|\Delta_{pqrs}\|_{\mathcal{F}}^2 + \sum_{tkl} \rho_{tkl} |Z^t_{kl}|^2,
\end{equation}
We have tested both weighted $\mathrm{L1}$ and $\mathrm{L2}$ regularizations of the form $(\sum_{tkl} \rho_{tkl} |Z^t_{kl}|^\gamma)^{\frac{2}{\gamma}}$ with $\gamma \in {1,2}$ but concentrate on $\mathrm{L2}$ regularization with uniform regularization strength $\rho_{tkl} = \rho$ in the rest of the main text.

As in C-DF, a joint optimization of the $U^t_{pq}$ and $Z^t_{pq}$ has unfavorable performance also with regularization, but the the two-step optimization of C-DF proposed in \cite{parrish2019quantum} can be adopted to the regularized case.
Further, for large $n$, a very expensive 6-index matrix inversion can be circumvented by carrying it out in a matrix-free manner with, e.g., a conjugate gradient algorithm (for details see Appendix~\ref{app:rc-df_optimization}).
We have found that this step benefits from the regularization ($\mathrm{L2}$), as it improves the conditioning of the matrix. 
In RC-DF, initialization can be done either from X-DF truncated to the target number of C-DF leafs or one can start from the full X-DF factorization and put a high penalty on the leafs that are to be truncated in the end. 
In practice, contrary to the difficulties reported in \cite{goings2022reliably,lee2021evenmore} on converging THC, RC-DF seems to be rather well behaved. 
Convergence may take thousands of iterations, but we had no difficulty converging RC-DF in large active spaces to much tighter residual Frobenius norm errors \eqref{eq:c-df_cost_optimization} and coupled cluster with singles, doubles, and perturbative triples (CCSD(T)) energy errors than those reported for THC \cite{goings2022reliably} (see Appendix~\ref{app:rc-df_optimization} for further details of the optimization procedure).

\section{Numerical results}
\label{sim_results}
Unless otherwise explicitly stated all results shown in the following were obtained with L2 regularization starting from a truncated X-DF guess and with a uniform regularization factor for all $n_t$ leafs of $\rho_{tkl} \eqqcolon \rho$.

We first investigate the advantages of RC-DF over other NISQ measurement schemes.
The performance of any such scheme is determined by both the systematic error introduced in case the Hamiltonian is approximated and the variance, $\Var$, of the estimator.
We quantify the overall performance by the mean squared error and take the square root to obtain a quantity that has units of energy $\sqrt{\MSE} \coloneqq \sqrt{\langle \hat H - \hat H' \rangle^2 + \Var}$, where $\hat H'$ is the compressed Hamiltonian obtained with the respective flavor of double factorization and for the Pauli grouping based schemes $\hat H' = \hat H$.

The variance further depends on the state, the overall shot budget, and the shot distribution.
In the main text we show data for the state being the complete active space configuration interaction (CASCI) ground state, but the plots look very similar for representative states along a VQE optimization trajectory.
We consider two shot distribution schemes:
A ``uniform'' distribution which divides the total number of shots uniformly among all bases in which measurements need to be preformed, and
an ``according to weights'' distribution which distributes the shots according to the L2 norm of the coefficients of each group of jointly measurable Pauli operators.
We chose the overall shot budget so that the best method is able to achieve chemical accuracy of $10^{-3}$ Hartree.
 
\subsection{NISQ measurement of the para-benzyne ground state}
As a first test case we consider the CASCI ground state in a (6e, 6o) active space of FON-HF/cc-pVDZ (with fractional occupations determined with temperature $0.1 [\mathrm{1/Eh}]$ and Gaussian broadening within the active space) \cite{barbatti2006ultrafast, sellner2013ultrafast, dunning1989a} orbitals of para-benzyne (see Fig.~\ref{fig:rcdf-total-error}).
The weighted shot distribution yields lower variance than uniform shot distribution in all cases by directing more shots to the more important first leafs.
RC-DF beats the second best method (X-DF) by approximately a factor of five and while the maximum number $21 = 6\,(6+1)/2$ of leafs yields the lowest $\MSE$, chemical accuracy is consistently achievable with RC-DF from $n_t = 7$ on.
The $\MSE$ of C-DF fluctuates widely and while it by chance achieves a good variance and approximation error for $8$ leafs, this is of limited use for practical applications.
The results are virtually unchanged over a broad range of $\rho$ values and good values for $\rho$ can be found in a systematic way even when running on quantum hardware (see Appendix~\ref{app:regularization_tuning}).

\begin{figure}[t]
    \centering
    \tikz{
    \node (plot) {\includegraphics[width=\linewidth]{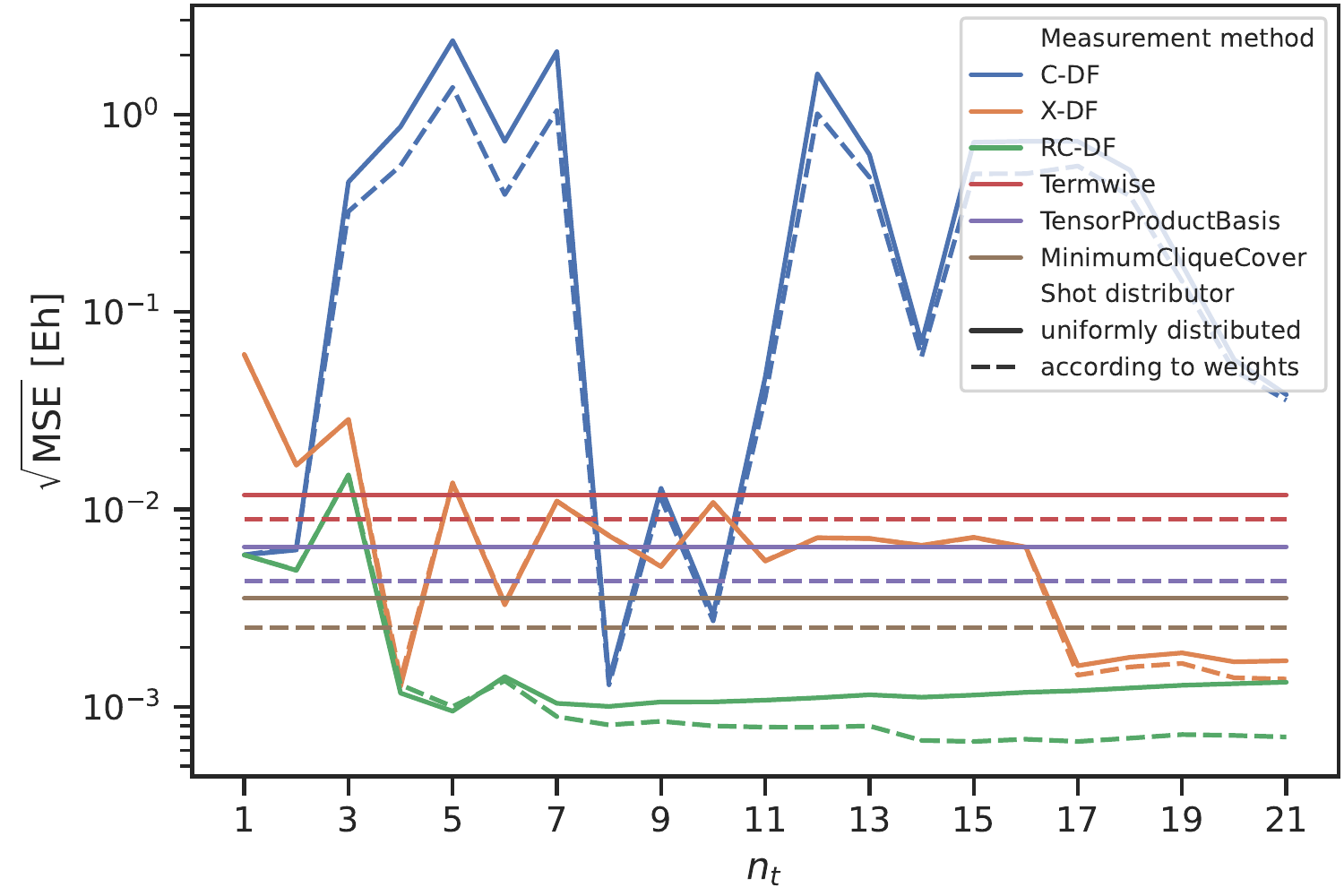}};
    \node[anchor=north, xshift=0mm, yshift=-10mm] at (plot.north) {\includegraphics[width=0.2\linewidth, angle=90]{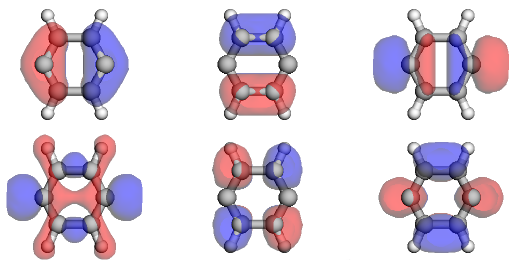}};
    }
    \caption{Performance of measurement schemes based on C-DF, X-DF, and RC-DF (with L2 regularization $\rho = 10^{-6}$) in comparison with the naive termwise Pauli scheme as well as the tensor product basis \cite{kandala2017hardware} and a minimum clique cover \cite{verteletskyi2020measurement} based Pauli grouping methods as implemented in \cite{pennylane} for $3\times10^5$ shots each in computing a single point energy in the (6e, 6o) CASCI ground state of para-benzyne on 12 qubits.
    To reach a $\MSE$ on par with that of RC-DF at $n_t=7$ would require over
    $1.1 \times 10^6$ shots with X-DF and $n_t=17$ and about
    $3.8 \times 10^6$ shots with the best Pauli grouping scheme.
    We also compare shot distribution schemes (see main text) and find that ``according to weights'' improves performance.}
    \label{fig:rcdf-total-error}
\end{figure}

\subsection{NISQ measurement of the singlet-triplet gap of naphthalene}
As a second test case, we investigate the singlet-triplet energy gap in a (10e, 10o) active space consisting of the $\pi$ system of naphthalene constructed with AVAS \cite{avas} as implemented in PySCF \cite{pyscf1,pyscf2}. The reference state was computed at the HF/def2-SVP\cite{weigend2005a} level of theory.
We only show data for double factorization based measurement schemes since the Pauli grouping based schemes become increasingly less competitive for larger systems.
Since the ``according to weights'' method is significantly better than uniform shot distribution, we use it exclusively in this case.
We compute the factorization once per $n_t$ and then evaluate the energetically lowest singlet and triplet energies from the same decomposition.
The $\MSE$ of the singlet-triplet energy gap is given by the square of the difference between the noiseless energy gaps $\Delta$ computed with the exact and the gap $\Delta'$ from the compressed Hamiltonian plus the sum of the variances of the singlet $\Var_\mathrm{S}$ and triplet $\Var_\mathrm{T}$ energies $\MSE = (\Delta-\Delta')^2 + \Var_\mathrm{S} + \Var_\mathrm{T}$.
We find the variances $\Var_\mathrm{S}$ and $\Var_\mathrm{T}$ of both states to be very similar, and for (R)C-DF, the $\MSE$ is dominated by the variance contributions for $n_t \geq 8$ whereas for X-DF, the systematic energy error remains high until $n_t \approx 25$.
Surprisingly, RC-DF reaches chemical accuracy already at $n_t = 6$ with the
assigned shot budget and is consistently more accurate than the other two factorization methods for all $n_t > 2$. By contrast, the variance of C-DF
singlet-triplet gap estimations is quite erratic, as explained previously.
While the accuracy in predicting the energy gap with X-DF is well controllable,
an approximately 12 times larger shot budget and more leafs would be needed
to reach chemical accuracy. This test case shows the reliable performance
of RC-DF for medium-sized active spaces and indicates that the method could
be employed for other chemical properties such as activation energies.
\begin{figure}[t]
    \centering
    \tikz{
    \node (plot) {\includegraphics[width=\linewidth]{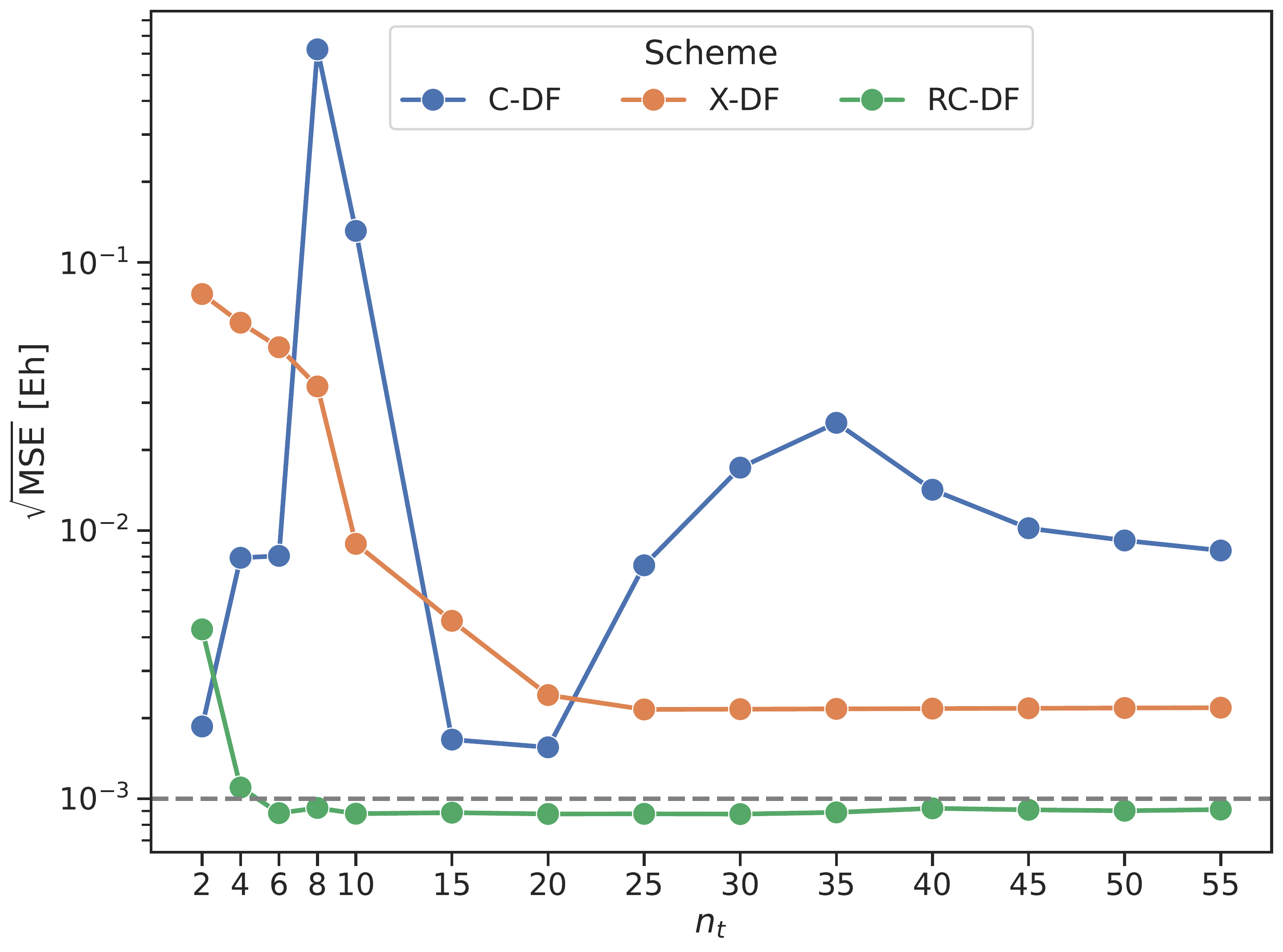}};
    \node[anchor=north east, xshift=-10mm, yshift=-10mm] at (plot.north east) {\includegraphics[width=0.5\linewidth]{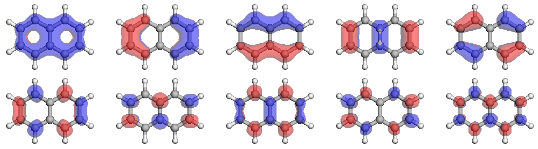}};
    }
    \caption{
    Square-root of the MSE for the singlet-triplet energy gap of naphthalene on 20 qubits (i.e., (10e, 10o) active space, shown in the inset).
    For each factorization method, $2 \times 10^6$ shots were used in total for the singlet and triplet state, distributed according to weights.
    A regularization factor of $\rho = 10^{-3}$ was used for RC-DF.
    RC-DF reaches chemical accuracy with only $n_t = 6$ leafs.
    X-DF would require at least $n_t = 25$ leafs and at least $12 \times 10^{6}$ shots to achieve the same accuracy as RC-DF with only $n_t = 6$ leafs.
    }
    \label{fig:rcdf-gap-error}
\end{figure}

\subsection{Combination with fluid fermionic fragments}
We further explore how X-DF and RC-DF can be combined with the fluid fermionic fragments (FFF) \cite{choi2023fluid} technique to further reduce shot budgets.
The FFF technique exploits the fact that certain quadratic terms of the Hamiltonian, called fluid fermionic fragments, can be taken care of in different parts of the energy estimator.
FFF minimizes the variance by optimizing how these terms are spread over the different possible locations (for more details see Appendix~\ref{app:comparison_with_fff}).
The variance can thereby either be approximated with that of a mock state whose variance can be classically efficiently computed or it can for example be estimated with part of the shot budget or from a classical shadow \cite{wan2022matchgate, low_classical_2022}, which can then also be used for the energy estimation.
In any case, the final form of the FFF optimized energy estimator is then state dependent and optimized to have low variance for certain states.
We find (see Figure~\ref{fig:rcdf_fff} in Appendix~\ref{app:comparison_with_fff}) that for the cases considered RC-DF and FFF nicely complement each other. 
Using RC-DF as initial point for FFF yields the lowest shot budgets and that the FFF optimization converges faster when started form RC-DF than from X-DF and that the state independent distribution of the fluid fermionic fragments corresponding to the Hamiltonian as written in \eqref{eq:df_hamiltonian} is a good initial guess for the FFF coefficients.

\begin{figure}[h]
    \centering
    \tikz{
    \node (plot) {\includegraphics[width=\linewidth]{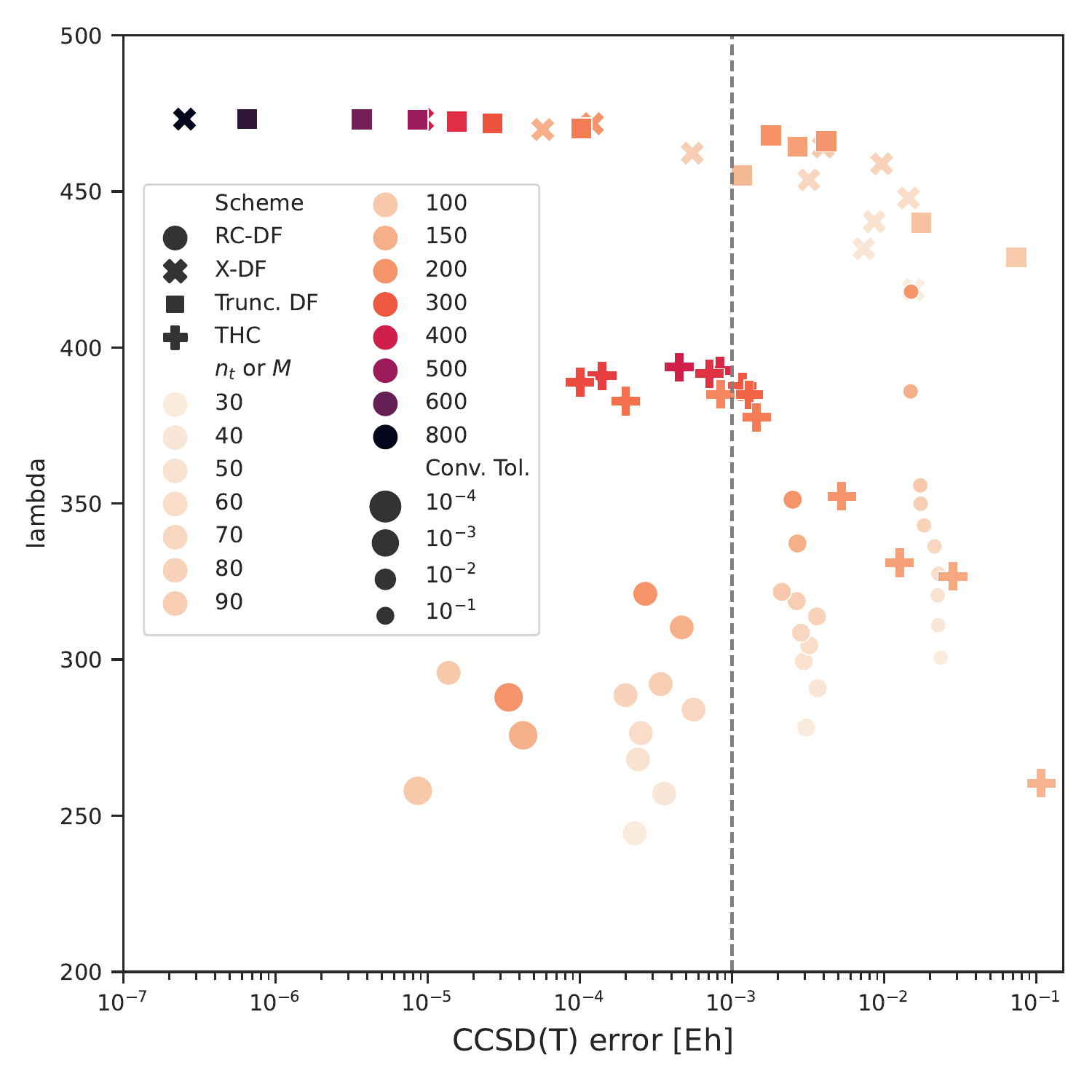}};
    \node[anchor=south west, xshift=11mm, yshift=11mm] at (plot.south west) {\includegraphics[width=0.22\linewidth]{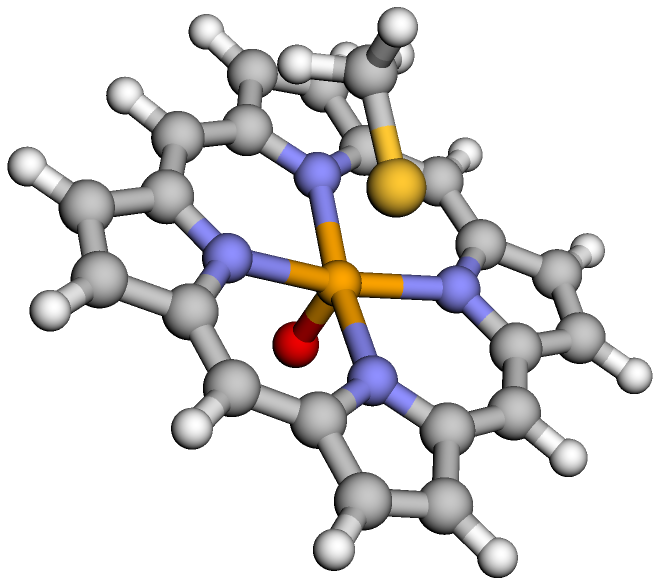}};
    \node[draw, circle, xshift=2.6mm, yshift=12.2mm] (goings2022reliably) at (plot) {};
    \draw[<-] (-0.85, -2.0) -- +(0.1, -0.2) node[at end, anchor=north west] {\tiny$\mathsf{n_t = 100}$};
    }
    \caption{Comparison of the achievable CCSD(T) error heuristic and lambda values $\lambda^{\mathrm{Burg}}_{\mathrm{DF}}$ for the truncated DF method based on \eqref{eq:df_subm_of_squares}, X-DF and RC-DF, as well as $\lambda^{\mathrm{Lee}}_{\mathrm{THC}}$ for THC. The color scheme represents the number of leafs $n_t$ for double factorization schemes, or the THC rank $M$.
    The active space Hamiltonian of the Cpd I model of cytochrome P450 and the data for THC and truncated DF were taken from \cite{goings2022reliably}.
    The encircled THC data point was used for the resource estimates there.
    To compare different levels of convergence we vary the squared Frobenius norm error \eqref{eq:c-df_cost_optimization} at which we abort the RC-DF optimization (Conv.~Tol.) and use $\rho=10^{-3}$.
    The data is tabulated in Appendix~\ref{sec:cpdidata}.
    }
    \label{fig:lambda_plot}
\end{figure}
\subsection{RC-DF for error corrected quantum computing}
We now turn to exploring the usefulness of RC-DF for error corrected algorithms based on qubitization.
As example we take the (34$\alpha$+29$\beta$e, 58o) active space of the Cpd I species of cytochrome P450 proposed in \cite{goings2022reliably}.
As this system is beyond the regime accessible with CASCI, we use, as in \cite{lee2021evenmore,goings2022reliably}, the CCSD(T) energy error as a heuristic to assess the quality of the compressed representation. Computational details can be found in Appendix \ref{sec:cpdidata}.

The aim is then to find compressed representation of the Hamiltonian that achieves both a low CCSD(T) error and a low lambda value.
We find $\lambda^{\mathrm{Burg}}_{\mathrm{DF}} < \lambda^{\mathrm{LCU}}_{\mathrm{DF}}$ in all cases and thus only display and compare the former.
As can be seen in Fig.~\ref{fig:lambda_plot}, RC-DF outperforms both previous DF methods and THC by a substantial amount.
Compared to truncated DF or X-DF we can almost cut $\lambda^{\mathrm{Burg}}_{\mathrm{DF}}$ in half, and thereby also the run time of the quantum computer, while at the same time achieving a CCSD(T) error smaller than $5 \times 10^{-5}$, which was not reached with THC \cite{goings2022reliably}.
Further comparing with THC, we can reduce the CCSD(T) error by an order of magnitude and the lambda value by roughly one third or, alternatively, RC-DF can achieve a $\lambda^\mathrm{Burg}_\mathrm{DF}$ that is only 60\% of $\lambda_{\mathrm{THC}}$ at comparable CCSD(T) error.
This improvement becomes even more noteworthy when recognizing that when qubitizing the RC-DF Hamiltonian one need not worry about the complications caused by the non-orthogonal nature of THC that had to be worked around in \cite{lee2021evenmore} (see Section III of \cite{lee2021evenmore}).
Converging the best RC-DF data point with $n_t=100$ took $2944$ L-BFGS iterations and approximately $11$ hours on a single GPU and our not highly optimized JAX \cite{jax} based implementation.

To investigate the scaling of the achievable lambda values with system size we considered the hydrogen chain benchmark in the STO-6G basis previously proposed in \cite{berry2019qubitization,lee2021evenmore}.
Also there we find that at $n=100$ orbitals (corresponding to $100$ hydrogen atoms) RC-DF achieves values of $\lambda^{\mathrm{Burg}}_{\mathrm{DF}}$ that are lower than previously reported values for $\lambda_{\mathrm{THC}}$ and we find, as for THC, an approximately linear scaling of $\lambda^{\mathrm{Burg}}_{\mathrm{DF}}$ with $n$ if $n_t$ is chosen such that the CCSD(T) error per particle is constant  
(see Appendix~\ref{app:lambda_scaling} for details).

It should be clear that the lambda values, while highly significant for determining the quantum resources of fault tolerant algorithms, are not the only relevant quantity and we do not claim that our analysis constitutes a comprehensive comparison of the quantum resources required for different methods.
We also leave to future work the possibility of regularizing methods such as RC-DF or THC with quantities more directly related to the required quantum resources than the norm-like term in \eqref{eq:regularization_term}, which could more directly steer the optimization towards factorizations with low qubit or T gate count, as desired.

\section{Conclusions}
We proposed the regularized compressed double factorization (RC-DF) method which, from a unified framework yields both a NISQ compatible measurement scheme with only linear circuit overhead and can be used in conjunction with qubitization in error corrected quantum algorithms for the simulation of chemistry.
We found that in both of these scenarios RC-DF leads to lower quantum run times when compared to previous double factorization (DF) and tensor hypercontraction (THC) schemes.
Contrary to THC, the Hamiltonian in DF form can also be used to construct trotter schemes that need to alternate between a very small number of non-commuting operators.
It will be interesting to compare the quantum resources (e.g. number of Toffoli gates) required for phase estimation based on THC and RC-DF via qubitization with RC-DF and trotterization. 

In the NISQ setting, this advantage is a consequence of the fact that the regularization guides the optimization towards compressed representations of the Hamiltonian with smaller coefficients, which reduces the variance of the resulting energy estimator.
In qubitization schemes, the smaller coefficients reduce the norm-like lambda parameter of the Hamiltonian on which the $T$ gate count depends in a multiplicative fashion.

Avoiding a six-index intermediate quantity during the RC-DF optimization and adopting a two step gradient based scheme previously developed by some of us for non-regularized compressed DF, we were able to make RC-DF scale well into the regime where quantum computers may provide an advantage over classical methods.
More work is needed to understand the precise scaling of RC-DF with active space and basis set size and to explore other options for regularization.

\section*{Data Availability}
The data supporting the findings of this manuscript have been uploaded to Zenodo with the DOI \texttt{10.5281/zenodo.7866658} \cite{oumarou2023accelerating_data}, including instructions on how to load
the data using Python.

\section*{Acknowledgements}
We thank Gian-Luca Anselmetti, Fotios Gkritsis, and Pauline Ollitrault, Artur Izmaylov and Seonghoon Choi for valuable discussions.
RMP thanks Mario Motta and Jeff Cohen for many discussions regarding C-DF.
QC Ware Corp.\ acknowledges generous funding from Covestro for the undertaking of this project.
Covestro acknowledges funding from the
German Ministry for Education and Research (BMBF)
under the funding program quantum technologies as part
of project HFAK (13N15630).

\textbf{Conflict of Interest:}
EGH and RMP own stock/options in QC Ware Corp. 

\bibliographystyle{quantum}
\bibliography{RC-DF.bib}%

\cleardoublepage

\begin{appendices}
\onecolumn
\section{RC-DF optimization procedure} \label{app:rc-df_optimization}
\noindent
The RC-DF cost function is
\begin{equation}
    C(X, Z) \coloneqq \frac12 \|\Delta_{pqrs}\|_{\mathcal{F}}^2 + \sum_{tkl} \rho_{tkl} |Z^t_kl|^\gamma,
\end{equation}
with regularization tensor $\rho_{tkl}$ and $\gamma=1$ in the $\mathrm{L1}$ case and $\gamma=2$ in the $\mathrm{L2}$ case.
Just like C-DF as outlined in \cite{parrish2019quantum}, also for RC-DF it is advisable to use a nested two-step optimization process to minimize $C$ with respect to $X$ and $Z$ following these steps:
\begin{enumerate}
    \item Update the $X^t_{pq}$ for fixed $Z^t_{kl}$. This can be done with a gradient based optimizer using the gradient 
    \begin{equation}
        \frac{\partial C}{\partial X^t_{pq}}=\frac{\partial \|\Delta_{pqrs}\|^2_{\mathcal{F}}}{\partial X^t_{pq}}=\sum_{mn}\frac{\partial C}{\partial U^t_{mn}}\frac{\partial U^t_{mn}}{\partial X^t_{pq}} .
    \end{equation} 
    \item Determine the optimal $Z^t_{kl}$ given the updated $X^t_{pq}$ by solving
    \begin{equation} \label{eq:optimal_z_from_x}
        \frac{\partial C}{\partial Z^t_{kl}}=0 .
    \end{equation}
\end{enumerate}

The gradient with respect to the orbital rotations' generators $X^t_{pq}$ for a given $Z^t_{pq}$ is independent of the regularization and stays unchanged compared to the original C-DF as the $U^t_{pq}$ do not appear in the regularization contribution to the cost function.
Therefore

\begin{equation}
    \frac{\partial C}{\partial U^t_{mn}} = -4 \sum_{qrsl}\Delta_{mqrs}U^t_{qn}Z^t_{nl}U^t_{rl}U^t_{sl}.
\end{equation}

Let us now look at the second step.
In the $\mathrm{L2}$ case \eqref{eq:optimal_z_from_x} yields 
\begin{equation}
    \frac{\partial C}{\partial Z^t_{kl}}=-\sum_{pqrs}\Delta_{pqrs}U^t_{pk}U^t_{qk}U^t_{rl}U^t_{sl} + \rho^t_{kl}Z^t_{kl}=0 ,
\end{equation}
by replacing $\Delta_{pqrs}$ by its expression in \ref{eq:c-df_cost_optimization}, we have
\begin{align}
\begin{aligned}
    &
    \sum_{pqrs}(pq|rs)U^t_{pk}U^t_{qk}U^t_{rl}U^t_{sl}
    =
    \rho_{kl}^tZ^t_{kl}+
    \sum_{omn}\left[\sum_pU^t_{pk}U^o_{pm} \right]\left[\sum_qU^t_{qk}U^o_{qm} \right]Z^o_{mn}  \left[\sum_rU^t_{rl}U^o_{rn} \right]\left[\sum_sU^t_{sl}U^o_{sn} \right] .
\end{aligned}
\end{align}
Note that the left-hand side of the equation is independent of $Z^o_{mn}$ i.e constant, and the right-hand side can be written as a linear combination of the unknowns $Z^o_{mn}$. Therefore, we have a system of linear equations of the form
\begin{align}
   b^t_{kl}((pq|rs), U^t_{pq})
   &
   =\sum_{omn} A^{tkl}_{omn}(U^t_{pq}, \rho^t_{kl}) Z^o_{mn}
\end{align}
with 
\begin{align} 
    &b((pq|rs), U^t_{pq})_{tkl} = \sum_{pqrs}(pq|rs)U^t_{pk}U^t_{qk}U^t_{rl}U^t_{sl}  %
\end{align}
and
\begin{align}
    \begin{aligned}
    A(U^t_{pq}, &\rho^t_{kl})_{tkl,omn} = \sum_{omn}\left(\left[\sum_pU^t_{pk}U^o_{pm} \right]\left[\sum_qU^t_{qk}U^o_{qm} \right] \left[\sum_rU^t_{rl}U^o_{rn} \right]
    \left[\sum_sU^t_{sl}U^o_{sn} \right] + \delta_{(omn, tkl)}\rho_{tkl}\right).
    \end{aligned}
\end{align}
    
In the $\mathrm{L1}$ case, \eqref{eq:optimal_z_from_x} yields
\begin{equation}
    \frac{\partial C}{\partial Z^t_{kl}}=-\sum_{pqrs}\Delta_{pqrs}U^t_{pk}U^t_{qk}U^t_{rl}U^t_{sl} + \rho^t_{kl}\mathrm{sign}(Z^t_{kl})=0 ,
\end{equation}
which implies
\begin{align}
\begin{aligned}
    &
    \sum_{pqrs}(pq|rs)U^t_{pk}U^t_{qk}U^t_{rl}U^t_{sl} - \rho^t_{kl}\mathrm{sign}(Z^t_{kl})
    =
    \sum_{omn}\left[\sum_pU^t_{pk}U^o_{pm} \right]\left[\sum_qU^t_{qk}U^o_{qm} \right]Z^o_{mn}  \left[\sum_rU^t_{rl}U^o_{rn} \right]\left[\sum_sU^t_{sl}U^o_{sn} \right]   
\end{aligned}
\end{align}
which is again a system of equations of the form
\begin{align}
    b^t_{kl}\left((pq|rs), U^t_{pq}, \rho^t_{kl}, Z^t_{kl}\right)
    &
    =\sum_{omn} A^{tkl}_{omn}(U^t_{pq}) Z^o_{mn} .
\end{align}
with 
\begin{align} 
\begin{aligned}
    &b^t_{kl}((pq|rs), U^t_{pq}, \rho^t_{kl}, Z^t_{kl}) = \sum_{pqrs}(pq|rs)U^t_{pk}U^t_{qk}U^t_{rl}U^t_{sl} - \rho^t_{kl}\sign(Z^t_{kl})
\end{aligned}
\end{align}
\begin{align}
\begin{aligned}
    &A(U^t_{pq})_{tkl,omn} = \sum_{omn}\left[\sum_pU^t_{pk}U^o_{pm} \right]\left[\sum_qU^t_{qk}U^o_{qm} \right]\left[\sum_rU^t_{rl}U^o_{rn} \right]
    \left[\sum_sU^t_{sl}U^o_{sn} \right] .
\end{aligned}
\end{align}
So in both cases one can find the optimal $Z^t_{kl}$ by pseudo-inverting $A_{tkl,omn}$.
While pseudo-inverting the six-index tensor $A_{tkl,omn}$ to determine the $Z^t_{kl}$ can be done for medium size systems, it is intractable for large systems.
To circumvent this problem, the inversion can be carried out in a matrix-free manner with, e.g., a conjugate gradient algorithm.
This procedure only requires the matrix-vector product and the matrix diagonal rather than the dense matrix.
Lastly, any gradient-descent based optimization algorithm can be used.
For the result presented in this work, we have used the L-BFGS \cite{zhu1997algorithm} algorithm as implemented in SciPy \cite{scipy}.
All time critical routines for the steps above were just-in-time compiled and ran on an NVIDIA Tesla V100 with the help of JAX \cite{jax}.

\section{Tuning of the regularization}\label{app:regularization_tuning}
The regularization tensor $\rho_{tkl}$ is a hyper-parameter that controls how much effort is put on achieving small $|Z^t_{pq}|$ versus reducing the Frobenius norm error.
A good balance must be found to obtain both a small systematic energy error and a small variance and lambda value.
We concentrate only the case that after truncating some number of $n_t$ leafs the regularization is chosen uniformly $\rho_{tkl} \eqqcolon \rho$.

In the main text we have quantified the overall performance of the measurement scheme by means of the $\MSE$, which combines the systematic error in the energy because of the approximation of the Hamiltonian and the statistical error due to variance and shot noise.
Here let us look at those two contributions separately.
We consider the same data underlying Figure~\ref{fig:rcdf-total-error} from the main text.

In Figure~\ref{fig:p-benzyne_rcdf_approx_error_vs_lambda} we show only the systematic error that results from RC-DF approximating the two body part of the Hamiltonian and represents what is achievable in the limit of infinitely many shots.
The energy approximation error grows roughly linearly with $\rho/10$ and when the number of leafs is increased it seems to become easier for the optimizer to find $Z^t_{kl}$ that are not too much distorted by the regularization and thus represent the $(pq|rs)$ tensor well leading to a smaller energy error.

In Figure~\ref{fig:rcdf_std_vs_num_leafs} we show the standard deviation $\sqrt{\Var}$ of the ground state energy estimator, quantifying how far from the infinite shot budget limit a measured energy value is likely to lie.
Larger regularization factors $\rho$ manifestly reduce the standard deviation.
If the shot distribution takes into account the weight of the leafs this effect is greatly enhanced.
Very strong regularization does not seem to help.
Increasing or decreasing the total shot budget would simply move the data point up and down according to the well known one over square root raw.  
As can be nicely seen there is a rather large window between $\rho \in [10^{-6}, 10^{-2}]$ where both errors (and thus their sum) is well below $10^-3$.
In practice one find a good $\rho$ by estimating the standard deviation on a quantum device starting from a comparably large regularization and then reduce the regularization to reduce the systematically error until the variance becomes too large to be compensated for by the affordable shot budget. 

\begin{figure}[tb]
    \centering
    \includegraphics[width=0.5\textwidth]{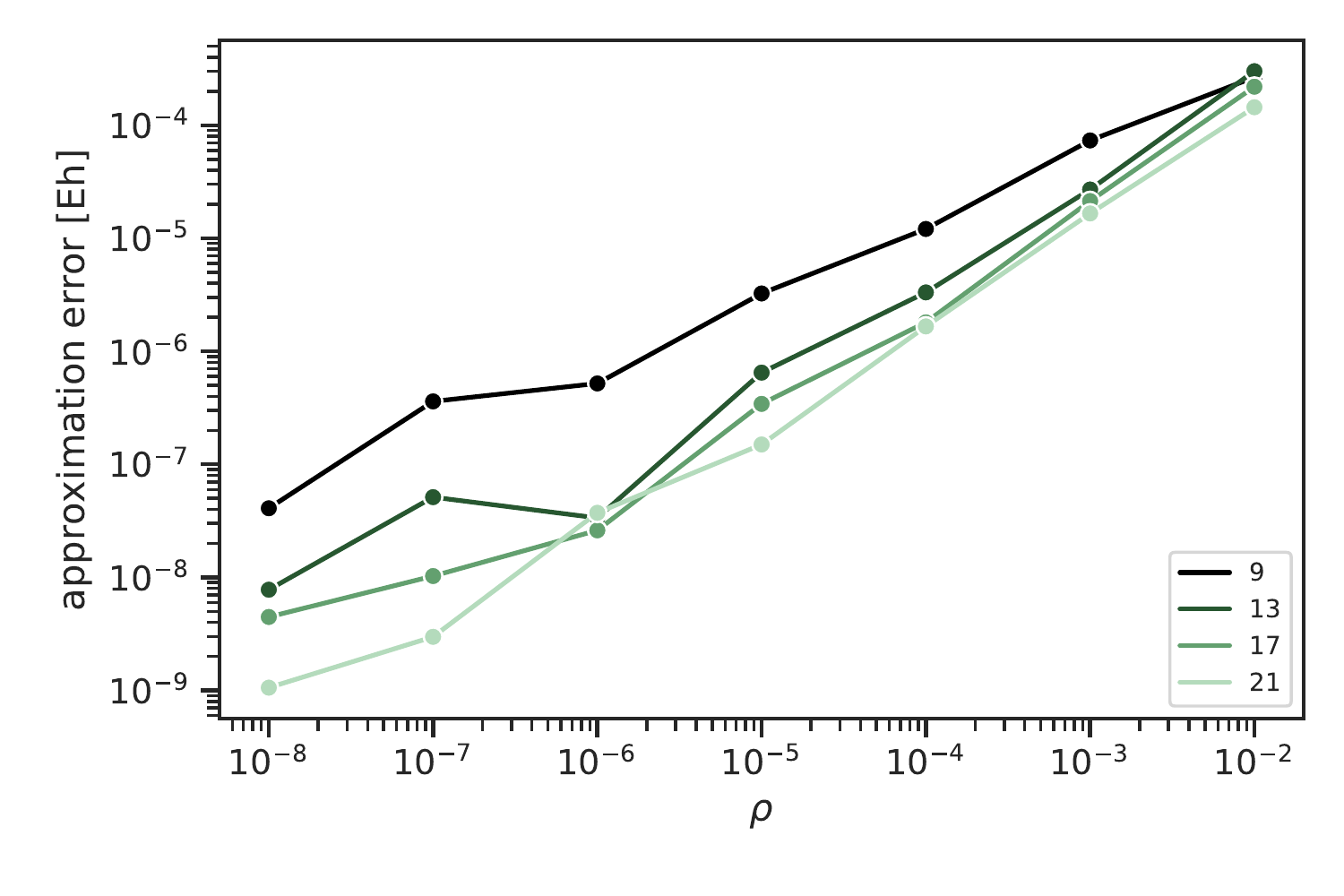}
    \caption{RC-DF systematic ground state energy approximation error of para-benzyne as a function of the regularization factor $\rho$ with different numbers of leafs $n_t$.}
    \label{fig:p-benzyne_rcdf_approx_error_vs_lambda}
\end{figure}
\begin{figure}[tb]
    \centering
    \includegraphics[width=0.5\textwidth]{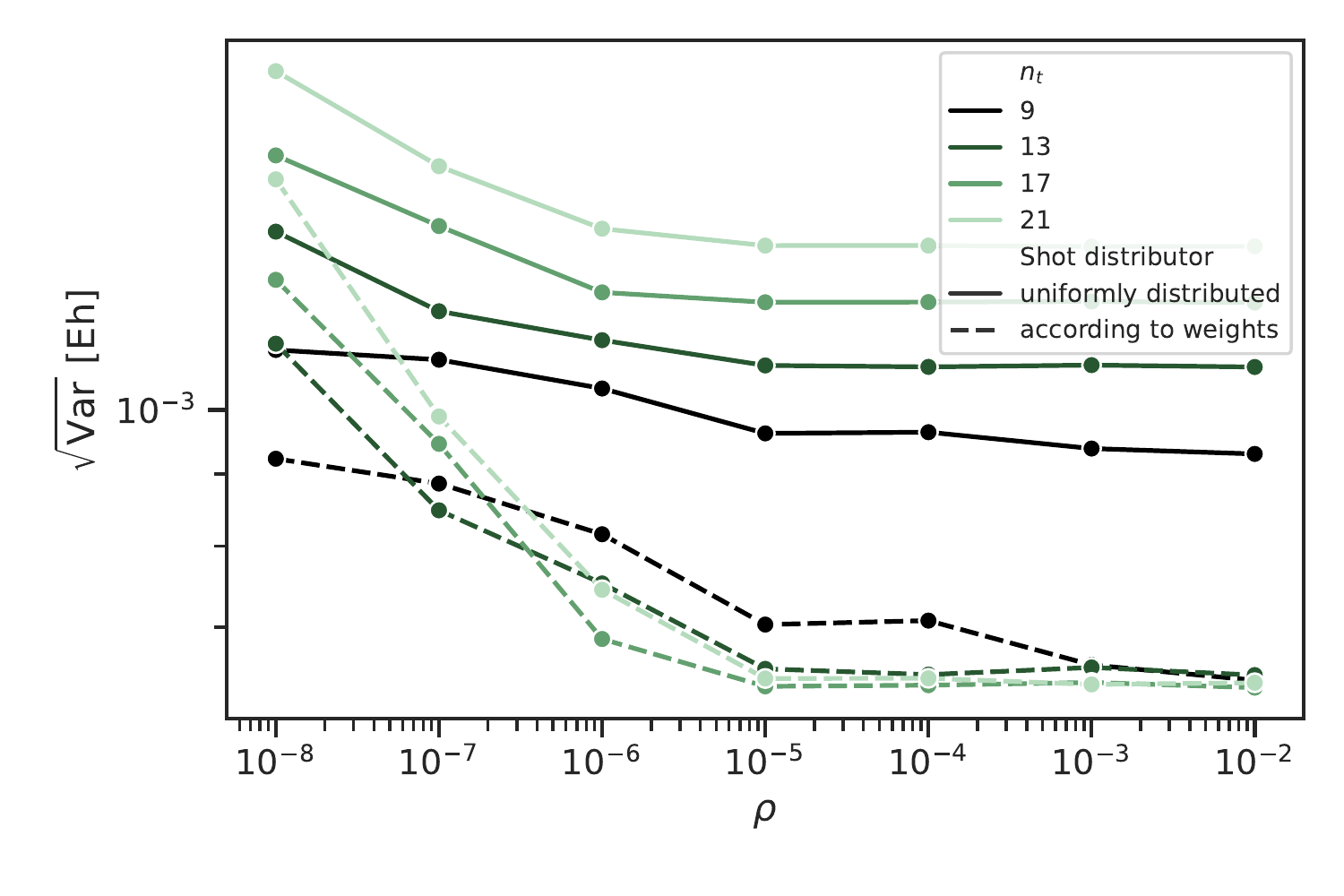}
    \caption{Standard deviation of the ground state energy estimator of para-benzyne for a budget of 300.000 shots as a function of $\rho$ for different numbers of leafs $n_t$.}
    \label{fig:rcdf_std_vs_num_leafs}
\end{figure}

\section{Lambda scaling with system size} \label{app:lambda_scaling}
To investigate the scaling of the lambda parameters achievable with double factorization in the large system size limit we use the hydrogen chain benchmark previously employed in the context of qubitization combined with low rank factorization in \cite{berry2019qubitization} and with THC in \cite{lee2021evenmore}.
In this benchmark system neutral hydrogen atoms are placed with 1.4 Bohr distance on a line and represented in the STO-6G basis, which has one spatial orbital per hydrogen so that the size of the complete active space $n$ is equal to the number of hydrogen atoms.

The lambda factor of the qubitization scheme from \cite{berry2019qubitization} is the sum of a one-body contribution $\lambda_T$ and a two-body contribution $\lambda_W$ and is strongly dominated by the latter.
Between $n=10$ and $n=100$ the authors find that $\lambda_W$ grows roughly proportional to $n^{2.5}$ to values up to about $2 \times 10^5$ (see Fig~11(a) in \cite{berry2019qubitization}).

In \cite{lee2021evenmore} a THC based qubitization scheme is proposed that improves over what the authors call the ``na\"{i}ve" approach. 
In Fig~20 of that work the two-body contribution $\lambda_2$ of these two schemes is compared, again for $10 \leq n \leq 100$.
The ``na\"{i}ve" $\lambda_2$ is found to grow approximately proportional to $n^{3.16}$ to values well beyond $5 \times 10^6$.
The $\lambda_2$ achievable with the non-orthogonal basis THC based qubitization proposed in \cite{lee2021evenmore} (which corresponds to the two-body part of $\lambda^{\mathrm{Lee}}_\mathrm{THC}$) is found to only grow approximately proportionally to $n^{1.16}$ and reaches approximately $4 \times 10^2$ at $n=100$.
Fig~9 of the same work compares the full lambda values (including the one-body contribution) for different factorization methods including THC (the main focus of that work and the most competitve of the factorization methods in this plot) and the Cholesky based DF described around \eqref{eq:df_subm_of_squares}.  
The parameters of these methods are chosen to yield a CCSD(T) correlation energy error per atom of at most 50 micro Hartree, amounting to $0.5 \times 10^{-3}$ Hartree at $n=100$.
The authors find a scaling of roughly $n^{1.88}$ for DF and of $n^{1.11}$ for THC and at $n=100$ values of $\lambda > 2 \times 10^{3}$ for DF and $\lambda \approx 5 \times 10^{2}$ for THC.

With a regularization of $\rho=5 \times 10^{-5}$ and even just a linear number of leafs $n_t = \lfloor(n+1)/2\rfloor$ (red line in Fig.~\ref{fig:lambda_scaling}) we find that RC-DF (with two norm regularization $\gamma=2$) is able to yield a constant Frobenius norm error and an absolute CCSD(T) error per atom $|\Delta_{\mathrm{CCSD(T)}}|/n$ in line with the 50 micro Hartree per atom used in Fig~9 of \cite{lee2021evenmore} (see the green line in Fig.~\ref{fig:lambda_scaling_b}).
At these parameters $\lambda^{\mathrm{Burg}}_{\mathrm{RC-DF}}$ is found to scale approximately like $n^{1.08\pm0.10}$ (fit through the values for $70 \leq n \leq 100$) and reaches approximately $2.5 \times 10^2$ at $n=100$, roughly a factor of two better than the THC results from \cite{lee2021evenmore}.
Compared to X-DF (which requires $n_t$ to grow faster than linearly to obtain acceptable accuracy (see the blue and yellow lines in Fig.~\ref{fig:lambda_scaling_b}), RC-DF yields roughly one order of magnitude lower lambda values at $n=100$, mostly owing to the fact that we find that the X-DF lambda values scale roughly quadratic with $n$.
$\lambda^{\mathrm{Burg}}$ and $\lambda^{\mathrm{LCU}}$ seem to have a similar scaling for all double factorization schemes considered here.

\begin{figure*}
     \centering
     \begin{subfigure}[t]{0.45\linewidth}
        \centering
        \caption{}
        \includegraphics[width=\linewidth]{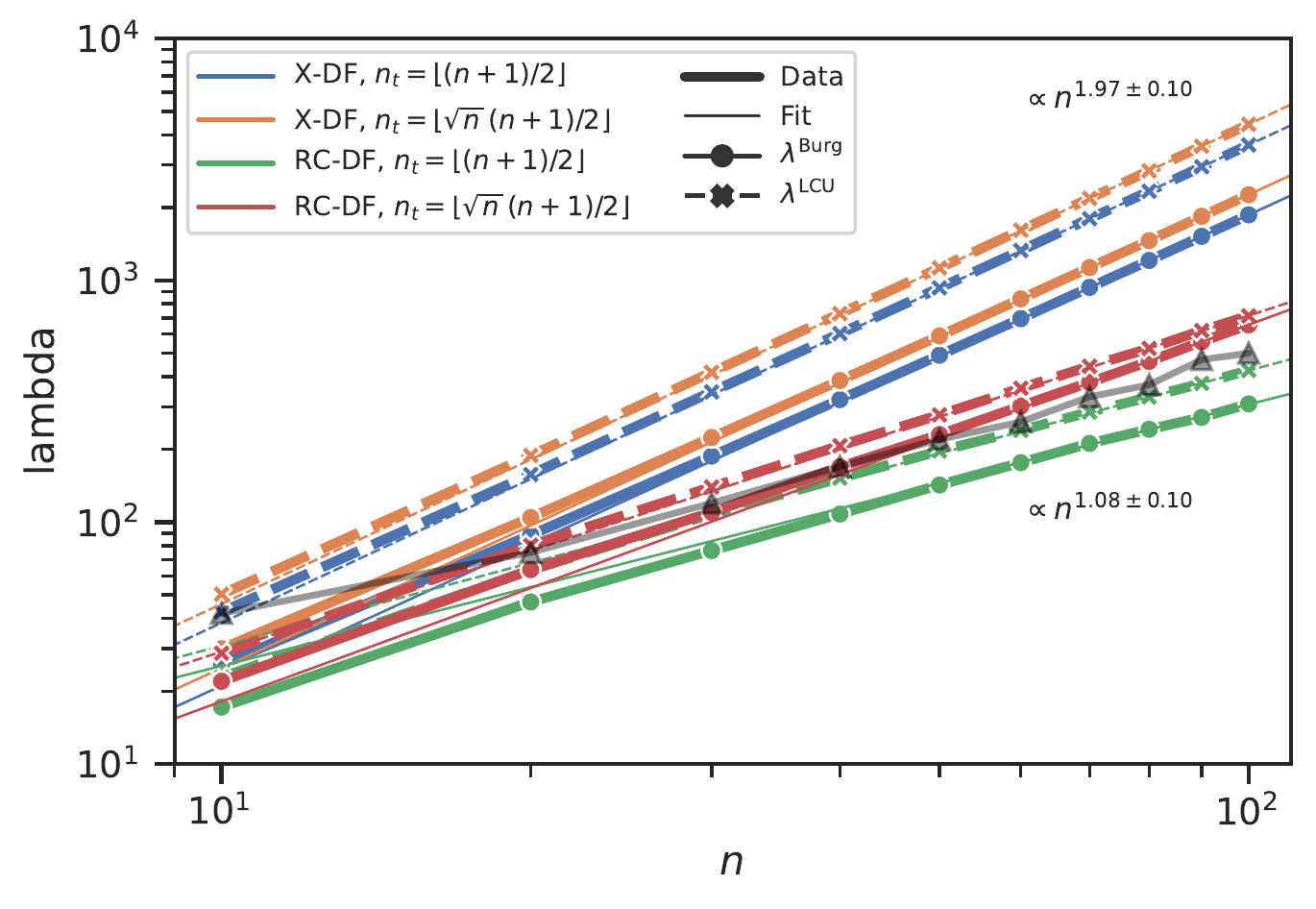}
        \label{fig:lambda_scaling_a}
     \end{subfigure}
     \hfill
     \begin{subfigure}[t]{0.5\linewidth}
        \centering
        \caption{}
        \includegraphics[width=\linewidth]{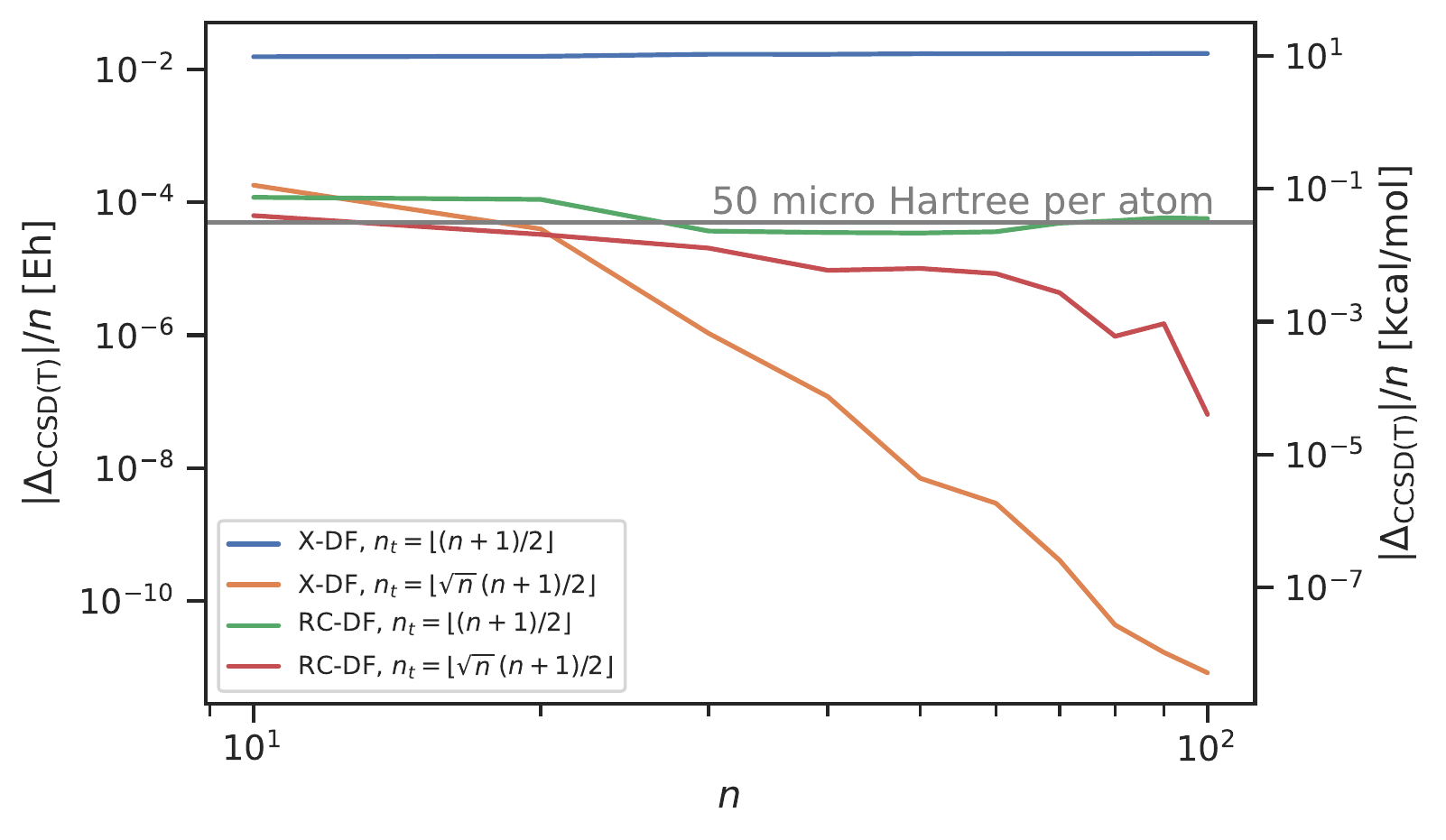}
        \label{fig:lambda_scaling_b}
     \end{subfigure}
     \\
     \begin{subfigure}[b]{0.45\linewidth}
        \centering
        \caption{}
        \includegraphics[width=\linewidth]{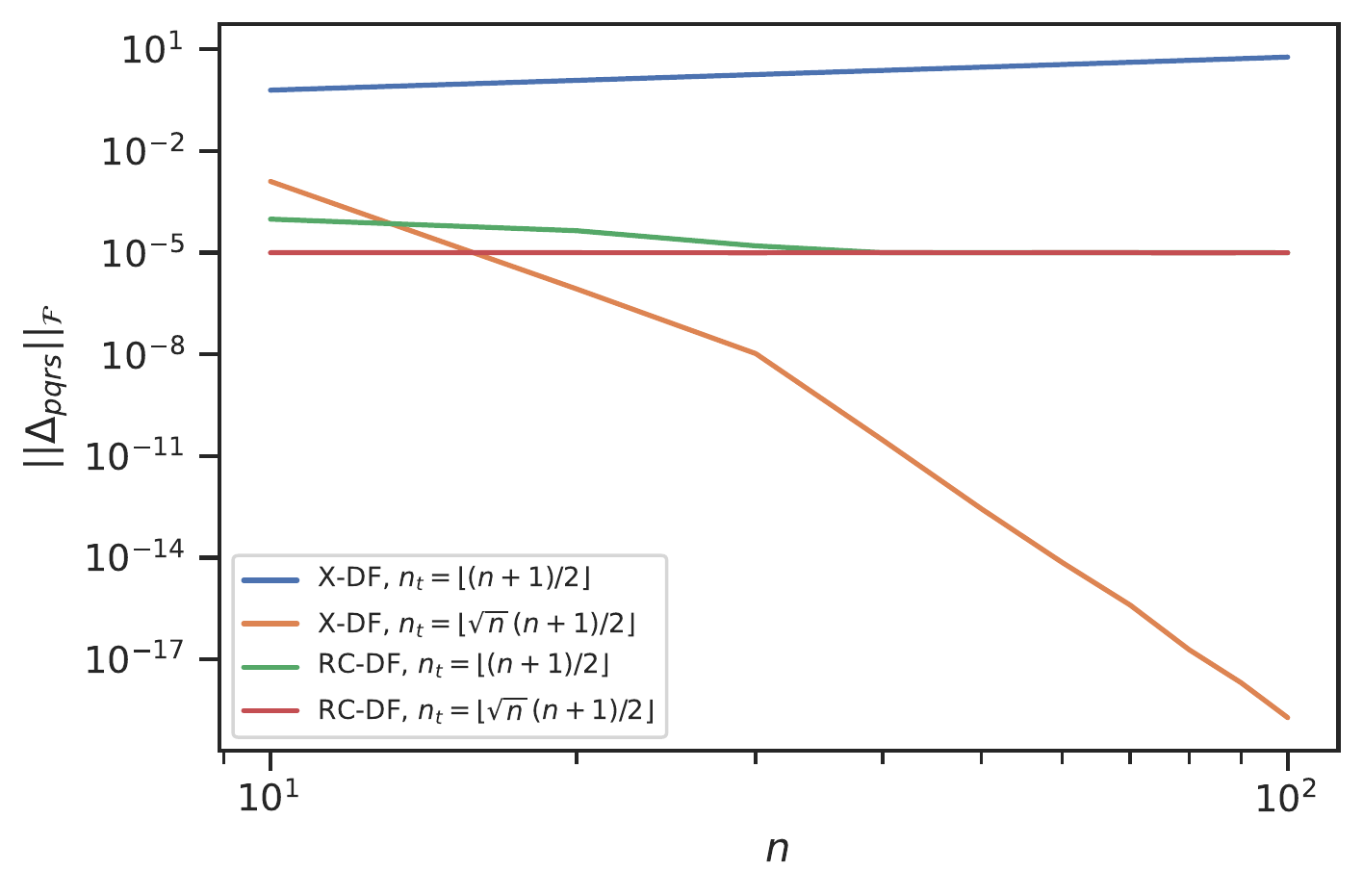}
        \label{fig:lambda_scaling_c}
     \end{subfigure}
     \hfill
     \begin{subfigure}[b]{0.5\linewidth}
        \centering
        \caption{}
        \includegraphics[width=\linewidth]{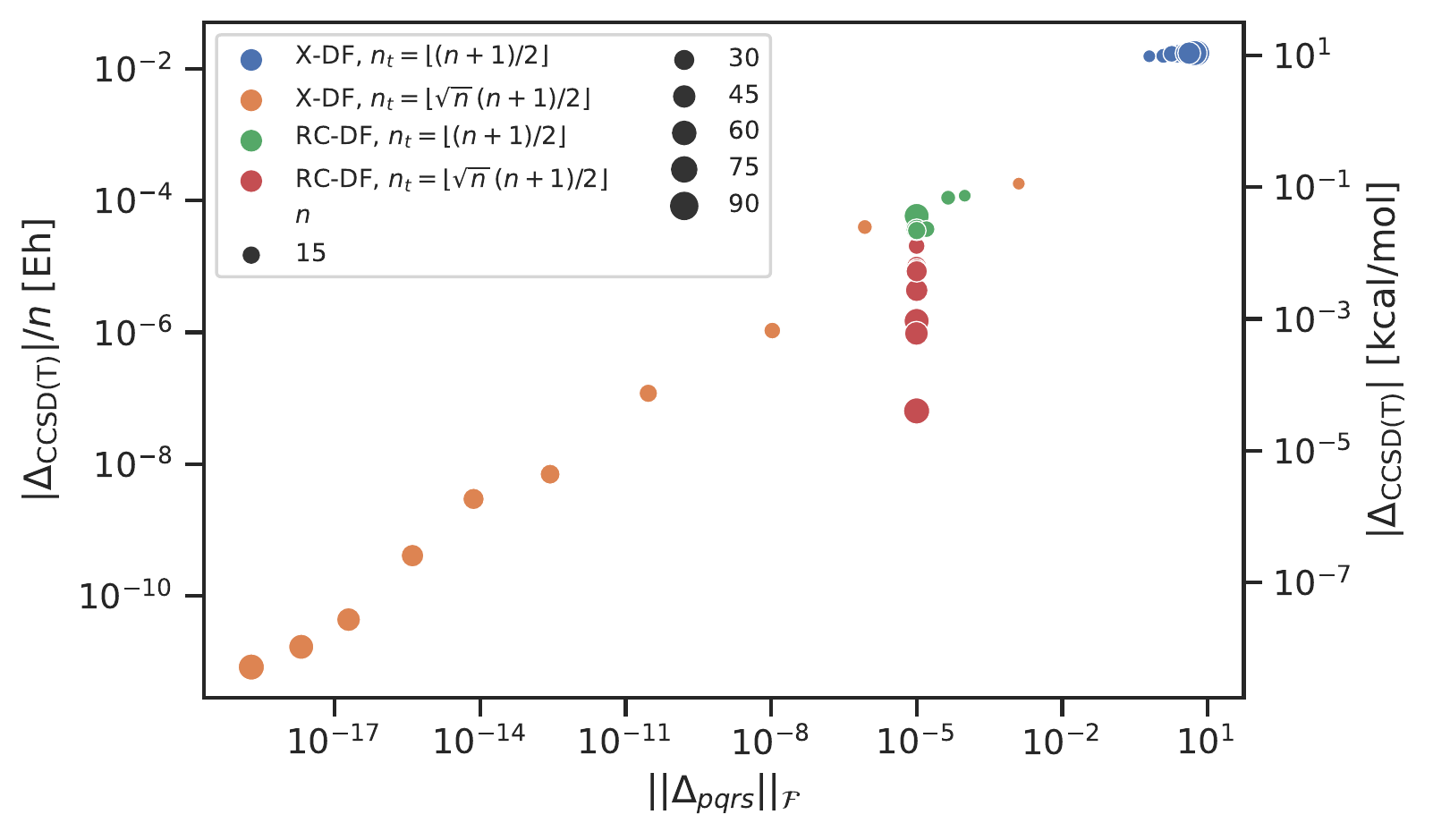}
        \label{fig:lambda_scaling_d}
     \end{subfigure}
    \caption{Comparison of lambda values achievable with X-DF and two-norm ($\gamma=2$) RC-DF with regularization $\rho=5 \times 10^{-5}$ and convergence tolerance $||\Delta_{pqrs}||_\mathcal{F} \leq 10^{-5}$ for the hydrogen chain benchmark in the STO-6G basis previously use in \cite{berry2019qubitization,lee2021evenmore} (panel~\subref{fig:lambda_scaling_a}).
    The black triangles are the lambda values eyeballed from Fig.~9 of \cite{lee2021evenmore}.
    The thin lines are power law fits to the date for $70 \leq n \leq 100$ used to estimate the exponents stated in the text and displayed in the figure.
    Two different scalings of $n_t$ with $n$ were chosen such that the CCSD(T) error per particle $|\Delta_{\mathrm{CCSD(T)}}|/n$ (panel \subref{fig:lambda_scaling_b}) is approximately below 50 micro Hartree per atom for all cases execpt X-DF with linear number of leafs.
    The Frobenius norm error of the two-body tensor $\|\Delta_{pqrs}\|_\mathcal{F}$ (panel \subref{fig:lambda_scaling_c}) seems to be well correlated with $|\Delta_{\mathrm{CCSD(T)}}|$ for X-DF (panel~\subref{fig:lambda_scaling_d}) but their magnitues can differ by several orders for other methods, so care needs to be taken when drawing conlusions base on $\|\Delta_{pqrs}\|_\mathcal{F}$
    (The RC-DF data points are on a vertical line because $\|\Delta_{pqrs}\|_\mathcal{F} \leq 10^{-5}$ was used as an abort criterion for the RC-DF optimization).}
    \label{fig:lambda_scaling}
\end{figure*}

\section{Necessary and sufficient condition for the symmetry of the $V^t_{pq}$} \label{app:8_fold_symmetry_implies_symmetry}
In this section we present a proof that shows that 8-fold symmetry of the $(pq|rs)$ tensor is a sufficient condition for the $V^t_{pq}$ to be symmetric for every $t$ where the eigenvalue $g_t \neq 0$.
Hence, the $Z^t_{pq}$ are real, which is required for the X-DF procedure to work, and the $U^t_{pq}$ are orthogonal (and thud can be chosen to be special orthogonal without loss of generality), which is essential for their implementation on a quantum computer by means of a fabric of givens rotations.
Slightly abusing notation, in the following lemma, we use $(pq|rs)$ for a four index tensor that does not necessarily arise from electron overlap integrals, but has the stated properties.
\begin{lemma}{} \label{lemma:8_fold_symmetry_implies_symmetry}
Let $(pq|rs)$ be any real, symmetric, i.e., $(pq|rs) = (rs|pq)$ tensor of shape $n \times n \times n \times n$.
By grouping the indices $pq$ and $rs$, let $(g_t)_t$ be its $n^2$ eigenvalues, ${V^t_{pq}}$ its diagonalizing unitary and let  $T_+=\left\{t: g_t > 0\right\}$, $T_-=\left\{t: g_t < 0\right\}$, and $T \coloneqq T_+ \cup T_-$.
Then $(pq|rs)$ can be written in the form
\begin{equation} \label{eq:pqrs_decomposed}
    (pq|rs) = \sum_{t \in T} V^t_{pq} \, g^t\, V_{rs}^t .
\end{equation}
Further, if and only if $(pq|rs)$ is in addition 8-fold symmetric, i.e.,
$(qp|rs) = (qp|sr) = (pq|sr) = (pq|rs) = (rs|pq) = (sr|pq) = (sr|qp) = (rs|qp)$ the matrices $(V^t_{pq})_{pq}$ are symmetric, i.e., $V^t_{pq} = V^t_{qp}\ \forall t \in T$, and $|T| \leq n (n+1) / 2$.
\end{lemma}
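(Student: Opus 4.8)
The plan is to view $(pq|rs)$ as an $n^2\times n^2$ matrix $M$ obtained by grouping $pq$ into one index and $rs$ into another, and to bring in the ``pair-swap'' operator $S$ on $\mathbb{R}^{n^2}$ (identified with the space of $n\times n$ matrices) defined by $(Sx)_{pq}=x_{qp}$. This $S$ is a real permutation matrix, hence symmetric, orthogonal, and an involution, $S^2=I$; its $(+1)$-eigenspace $E_+$ is the space of symmetric matrices, with $\dim E_+=n(n+1)/2$, and its $(-1)$-eigenspace is the antisymmetric matrices. The whole argument then reduces to short statements about how $M$ interacts with $S$.

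For the decomposition \eqref{eq:pqrs_decomposed}, I would note that the standing hypothesis $(pq|rs)=(rs|pq)$ is exactly $M=M^{\mathsf T}$, so the real spectral theorem supplies a real orthonormal eigenbasis $(v_t)_t$ with real eigenvalues $(g_t)_t$ and $M=\sum_t g_t\,v_t v_t^{\mathsf T}$; reshaping each $v_t$ into $V^t_{pq}$ and dropping the terms with $g_t=0$ gives $(pq|rs)=\sum_{t\in T}V^t_{pq}\,g_t\,V^t_{rs}$. I would also record that this identity does not depend on the choice of eigenbasis inside degenerate eigenspaces, since $\sum_{t:\,g_t=g}v_t v_t^{\mathsf T}$ is the spectral projector onto the $g$-eigenspace.

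For the ``if'' direction, the key observation is that the 8-fold condition in particular includes $(qp|rs)=(pq|rs)$, which says precisely $SM=M$ (and together with $M=M^{\mathsf T}$, $S=S^{\mathsf T}$ this also gives $MS=M$), so this one relation suffices. From $SM=M$ I get $Sy=y$ for every $y\in\operatorname{range}(M)$, i.e.\ $\operatorname{range}(M)\subseteq E_+$. Each eigenvector $v_t$ with $t\in T$ satisfies $v_t=M(v_t/g_t)\in\operatorname{range}(M)$, hence $v_t\in E_+$, which is exactly $V^t_{pq}=V^t_{qp}$; and since the $(v_t)_{t\in T}$ are orthonormal vectors in $E_+$ we get $|T|\le\dim E_+=n(n+1)/2$. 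For the converse I would run this backwards: assuming each $V^t$ with $t\in T$ is symmetric and using \eqref{eq:pqrs_decomposed}, the relations $(qp|rs)=(pq|rs)$ and $(pq|sr)=(pq|rs)$ follow termwise from $V^t_{qp}=V^t_{pq}$ and $V^t_{sr}=V^t_{rs}$, and together with the assumed $(rs|pq)=(pq|rs)$ these generate the full 8-fold symmetry (for instance $(pq|sr)=(sr|pq)=(rs|pq)=(pq|rs)$).

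The only point that needs care is the handling of degenerate eigenvalues: one must read ``the $V^t_{pq}$ are symmetric'' at the level of the inclusion $\operatorname{range}(M)\subseteq E_+$, so that it holds for every admissible diagonalizing unitary rather than just a fortunate choice, and one must use the basis-independent spectral decomposition in the converse. Beyond that, the argument is elementary linear algebra about the involution $S$, so I do not anticipate a serious obstacle.
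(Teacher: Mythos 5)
Your proof is correct, and it takes a genuinely different route from the paper's. The paper splits the tensor into its positive- and negative-eigenvalue parts $(pq|rs)_\pm$, argues that each part inherits the 8-fold symmetry (via multiplication by an auxiliary operator $R$), and then exploits the \emph{equality case} of a Cauchy--Schwarz inequality, applied entrywise in $(p,q)$, to conclude $V^t_{pq} = \pm\alpha\, V^t_{qp}$ and finally $\alpha = \pm 1$, so that the symmetry of each $V^t$ is extracted term by term. You instead phrase everything through the index-swap involution $S$ on $\mathbb{R}^{n^2}$: the single relation $(qp|rs)=(pq|rs)$ is exactly $SM=M$, hence $\operatorname{range}(M)\subseteq E_+$, and since every eigenvector with $g_t\neq 0$ lies in $\operatorname{range}(M)$, all $V^t$ with $t\in T$ are symmetric, with the bound $|T|\le \dim E_+ = n(n+1)/2$ falling out of orthonormality for free (in the paper this bound is obtained separately, in the converse direction, from orthogonality of symmetric $V^t$). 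Your converse is the same termwise verification as the paper's. What your approach buys: it is shorter and more structural, it avoids the positive/negative split and the Cauchy--Schwarz equality analysis entirely, the dimension count and the symmetry statement come from one inclusion, and — as you note — the statement is manifestly independent of the choice of diagonalizing unitary within degenerate eigenspaces, a point the paper's entrywise argument leaves implicit. What the paper's argument buys is essentially nothing beyond self-containedness at the level of index manipulations; your subspace formulation is the cleaner proof of the same equivalence.
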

\begin{proof}
That symmetry of the $V^t_{pq}$ implies 8-fold symmetry of $(qp|rs)$ can be verified directly from \eqref{eq:pqrs_decomposed} and whenever the $V^t_{pq}$ are symmetric, since they are also orthogonal for different values of $t$, there can be at most $n\,(n+1)/2$ of them and thus the bond on the size $|T|$ of $T$ holds.

To show that 8-fold symmetry implies symmetry of all $V^t_{pq}$ with $t \in T$ let us first define
\begin{align}
    (pq|rs)_\pm &\coloneqq \sum_{t \in T_\pm} V^t_{pq} \, g^t\, V_{rs}^t \\
    |(pq|rs)| &\coloneqq \sum_{t \in T} V^t_{pq} \, |g^t|\, V_{rs}^t .
\end{align}
so that we have $(pq|rs) \pm |(pq|rs)| = 2 \, (pq|rs)_\pm$.
Since we also have $|(pq|rs)| = (pq|rs) \, R$ with
\begin{align}
    R &\coloneqq \sum_{t \in T} V^t_{pq} \, \sign(g^t)\, V_{rs}^t .
\end{align}
symmetric and invertible, $8$-fold symmetry of $(pq|rs)$ implies $8$-fold symmetry of $|(pq|rs)|$ and thereby $8$-fold symmetry of both $(pq|rs)_+$ and $(pq|rs)_-$ individually.
The symmetry $(pq|rs)_\pm = (qp|rs)_\pm$ then implies that
\begin{equation}
    \sum_{t \in T_\pm}(V^t_{pq}V^t_{rs}-V^t_{qp}V^t_{rs}) \, g^t = 0 \quad \forall p,q,r,s .
\end{equation}
For $p=r$ and $q=s$ we can specialized this to 
\begin{equation} \label{eq:cs_is_equality}
    \sum_{t \in T_\pm} g^t V^t_{pq} V^t_{qp} = \sum_{t \in T_\pm} g^t (V^t_{pq})^2 \quad \forall p,q .
\end{equation}
The symmetry $(pq|pq)_\pm = (qp|qp)_\pm$ together with \eqref{eq:pqrs_decomposed} implies
\begin{equation} \label{eq:sum_of_squares_equal}
    \sum_{t \in T_\pm} g_t (V^t_{pq})^2 = \sum_{t \in T_\pm} g_t (V^t_{qp})^2 
\end{equation}
and thus the left and right hand side of \eqref{eq:cs_is_equality} can be identified to be the left and right hand side of the Cauchy-Schwarz inequalities of the inner product of the vectors $(\sqrt{g_t}^*V^t_{pq})_{t \in T_\pm}$ and $(\sqrt{g_t}V^t_{qp})_{t \in T_\pm}$:
\begin{align} \label{eq:CS}
    \Big(\sum_{t \in T_\pm} (\sqrt{g_t}^* V^t_{pq})^* & \sqrt{g_t} V^t_{qp}\Big)^2
    =\\
    \Big(\sum_{t \in T_\pm} g_t \, V^t_{pq} V^t_{qp}\Big)^2 
    &
    \leq \sum_{t \in T_\pm} |g_t| (V^t_{pq})^2 \sum_{t \in T_\pm} |g_t| (V^t_{qp})^2 \\
    &= \sum_{t \in T_\pm} \pm |g_t| (V^t_{pq})^2 \sum_{t \in T_\pm} \pm |g_t| (V^t_{qp})^2\\
    &=\sum_{t \in T_\pm} g_t \, (V^t_{pq})^2 \sum_{t \in T_\pm} g_t \, (V^t_{qp})^2\\
    &=
    \Big(\sum_{t \in T_\pm} g_t \, (V^t_{qp})^2\Big)^2 \quad \forall p,q ,
\end{align}
where in the last step we have used \eqref{eq:sum_of_squares_equal}.
Thus \eqref{eq:cs_is_equality} implies that these Cauchy-Schwarz inequalities are indeed fulfilled with equality, which is the case if and only if the two vectors in the Cauchy-Schwarz inequality are identical up to a prefactor, i.e., for each $p,q$ there must exists a scalar $\alpha$ such that 
\begin{align}
    &
    \sqrt{g_t}^* V^t_{pq} = \alpha \, \sqrt{g_t} V^t_{qp} \quad \forall t \in T_{\pm}\\
    & \Leftrightarrow V^t_{pq} = \pm \alpha \, V^t_{qp} \quad \forall t \in T_{\pm}~.
\end{align}
Since $(pq|rs) = (qp|rs)$, we further have
\begin{align}
    (pq|rs) &= \sum_{t \in T_\pm} V^t_{pq} g^t V^t_{rs} = \pm \alpha \sum_{t \in T_pm} V^t_{qp} g^t V^t_{rs} \\
    &=\sum_{t \in T\pm} V^t_{qp} g^t V^t_{rs} = (qp|rs),
\end{align}
which is possible only if $\pm\alpha = 1$.
Thus we have as claimed $V^t_{pq} = V^t_{qp} \ \forall t \in T$.\\

\end{proof}

\section{Computational methodology for Cpd I test case}\label{sec:cpdidata}
The active space integrals for Cpd I were obtained from the deposited data \cite{goings2022reliably_data} of Ref.\ \cite{goings2022reliably}. The system under consideration
here is labelled ``X'' in Ref.\ \cite{goings2022reliably}, and consists of a (34$\alpha$+29$\beta$, 58o) active space. We factorized the two-body integrals using
RC-DF with $n_t \in \{30, 40, 50, 60, 70, 80, 90, 100, 150, 200\}$ and additional $n_t=400,~800$ with X-DF. For RC-DF, we employed $\rho = 10^{-3}$ and varied the convergence tolerance from $10^{-1}$ to $10^{-4}$.
From the factorized two-body integrals, a CCSD(T) energy was computed as described in Ref.\ \cite{goings2022reliably} using the \texttt{chemftr} Python library (\url{https://github.com/ncrubin/chemftr}) interfaced with PySCF \cite{pyscf1, pyscf2}. The CCSD(T) energy error is the energy difference of the CCSD(T) energies with exact
two-body integrals and the two-body integrals reconstructed from the factorization.
For both factorization schemes, we evaluated $\lambda_\mathrm{DF}^{\mathrm{Burg}}$ and  $\lambda_\mathrm{DF}^{\mathrm{LCU}}$.
Since the data for the truncated DF scheme in Ref.\ \cite{goings2022reliably} are not shown in the paper, we recomputed the factorization using \texttt{chemftr} and checked
that for exact factorizations, the lambda parameters agree exactly with our X-DF implementation.
The RC-DF and X-DF results are shown in Tables \ref{tab:rcdf_cpd1} and \ref{tab:xdf_cpd1}, respectively. In addition,
Table \ref{tab:truncated_df} contains the recomputed results for truncated DF from Ref.\ \cite{goings2022reliably}.
The factorized Hamiltonians, together with the resulting energy errors and lambda factors were deposited on Zenodo \cite{oumarou2023accelerating_data}.

\newcolumntype{Y}{>{\centering\arraybackslash}X}
\begin{table}[h]
\centering
\caption{RC-DF performance summary for Cpd I}\label{tab:rcdf_cpd1}
\begin{tabularx}{0.85\textwidth}{@{}lcYSYY}
\toprule
Conv.\ Tol.\ & $n_t$ &  $\|\Delta_{pqrs}\|_\mathcal{F}$ &  {CCSD(T) error [mEh]} &  $\lambda_\mathrm{DF}^\mathrm{Burg}$ & $\lambda_\mathrm{DF}^\mathrm{LCU}$\\
\midrule
$10^{-1}$ & 30  &    0.4472 & -23.5907 &        300.6 & 456.5 \\
       & 40  &    0.4412 &    -22.6134 &        311.0 & 468.6 \\
       & 50  &    0.4382 &    -22.4729 &        320.6 & 478.0 \\
       & 60  &    0.4415 &    -22.7342 &        327.5 & 478.1 \\
       & 70  &    0.4452 &    -21.4039 &        336.3 & 483.7 \\
       & 80  &    0.4352 &    -18.2827 &        343.0 & 490.5 \\
       & 90  &    0.4310 &    -17.3526 &        349.9 & 496.1 \\
       & 100 &    0.4278 &    -17.2627 &        355.8 & 498.1 \\
       & 150 &    0.4232 &    -14.8979 &        385.9 & 521.0 \\
       & 200 &    0.4258 &    -15.0266 &        417.9 & 545.2 \\ \midrule
$10^{-2}$ & 30  &    0.1411 &  -3.0769 &        278.2 & 430.4 \\
       & 40  &    0.1411 &     -3.6537 &        290.8 & 448.5 \\
       & 50  &    0.1408 &     -2.9621 &        299.5 & 462.7 \\
       & 60  &    0.1414 &     -3.2143 &        304.6 & 470.0 \\
       & 70  &    0.1406 &     -2.8366 &        308.6 & 475.4 \\
       & 80  &    0.1407 &     -3.6169 &        313.8 & 482.2 \\
       & 90  &    0.1407 &     -2.6580 &        318.7 & 489.2 \\
       & 100 &    0.1409 &     -2.1238 &        321.7 & 492.1 \\
       & 150 &    0.1410 &     -2.6875 &        337.2 & 508.8 \\
       & 200 &    0.1413 &     -2.5034 &        351.2 & 524.2 \\ \midrule
$10^{-3}$ & 30  &    0.0447 &  -0.2292 &        244.4 & 386.6 \\
       & 40  &    0.0447 &     -0.3581 &        257.1 & 394.1 \\
       & 50  &    0.0447 &     -0.2405 &        268.0 & 409.6 \\
       & 60  &    0.0447 &     -0.2515 &        276.5 & 422.8 \\
       & 70  &    0.0447 &     -0.5585 &        284.0 & 434.3 \\
       & 80  &    0.0447 &     -0.1995 &        288.6 & 442.6 \\
       & 90  &    0.0447 &     -0.3395 &        292.2 & 450.4 \\
       & 100 &    0.0447 &     -0.0137 &        295.8 & 456.6 \\
       & 150 &    0.0446 &     -0.4667 &        310.3 & 481.2 \\
       & 200 &    0.0446 &     -0.2690 &        321.1 & 499.3 \\ \midrule
$10^{-4}$& 100 &    0.0141 &    0.0086 & 258.0  & 390.8       \\
       & 150 &    0.0141 &     -0.0423 &        275.7 & 414.4 \\
       & 200 &    0.0141 &     -0.0340 &        287.9 & 435.3 \\
\bottomrule
\end{tabularx}
\end{table}
\begin{table}[h]
\centering
\caption{X-DF performance summary for Cpd I}\label{tab:xdf_cpd1}
\begin{tabularx}{0.85\textwidth}{lYSYY}
\toprule
$n_t$ &  $\|\Delta_{pqrs}\|_\mathcal{F}$ &  {CCSD(T) error [mEh]} &  $\lambda_\mathrm{DF}^\mathrm{Burg}$ & $\lambda_\mathrm{DF}^\mathrm{LCU}$\\
\midrule
30        &   1.01455 &  15.5688 &        418.5 &       759.2 \\
40        &   0.72606 &  -7.3444 &        431.6 &       784.9 \\
50        &   0.58478 &  -8.5971 &        440.3 &       802.0 \\
60        &   0.47298 & -14.4322 &        447.9 &       816.8 \\
70        &   0.38068 &  -3.1890 &        453.7 &       828.2 \\
80        &   0.29495 &  -9.6247 &        458.8 &       838.3 \\
90        &   0.21062 &   0.5473 &        462.2 &       844.9 \\
100       &   0.15487 &  -3.9773 &        464.3 &       848.9 \\
150       &   0.04288 &  -0.0570 &        469.8 &       859.8 \\
200       &   0.01609 &   0.1208 &        471.7 &       863.4 \\
400       &   0.00126 &   0.0092 &        473.0 &       866.0 \\
800       &   0.00001 &   0.0003 &        473.2 &       866.3 \\
\bottomrule
\end{tabularx}
\end{table}
\begin{table}[h]
\centering
\caption{Truncated DF performance summary for Cpd I\textsuperscript{a)}} \label{tab:truncated_df}
\begin{tabularx}{0.85\textwidth}{lYYSY}
\toprule
$n_t$\textsuperscript{b)} &  threshold\textsuperscript{c)} & $\|\Delta_{pqrs}\|_\mathcal{F}$ &  {CCSD(T) error [mEh]} &  $\lambda_\mathrm{DF}^\mathrm{Burg}$\\
\midrule
99  &    0.07500 &        0.4909 &  -73.8021 &   428.8 \\
114 &    0.05000 &        0.3631 &  -17.4890 &   439.9 \\
131 &    0.02500 &        0.1892 &    1.1578 &   455.1 \\
178 &    0.01000 &        0.0836 &    2.6993 &   464.3 \\
194 &    0.00750 &        0.0644 &    4.1638 &   466.1 \\
207 &    0.00500 &        0.0457 &    1.8040 &   467.9 \\
241 &    0.00250 &        0.0245 &   -0.1019 &   470.1 \\
309 &    0.00100 &        0.0106 &    0.0266 &   471.7 \\
364 &    0.00050 &        0.0054 &    0.0155 &   472.3 \\
505 &    0.00010 &        0.0012 &   -0.0086 &   473.0 \\
568 &    0.00005 &        0.0006 &    0.0037 &   473.1 \\
702 &    0.00001 &        0.0001 &    0.0007 &   473.1 \\
\bottomrule
\end{tabularx}
\begin{flushleft}
\footnotesize
\textsuperscript{a)} Recomputed with \texttt{chemftr}.\\
\textsuperscript{b)} Referred to as $L$ in Refs.\ \cite{berry2019qubitization,lee2021evenmore}.\\
\textsuperscript{c)} Eigenvector screening threshold with which the accuracy of the factorization is tuned, see Ref.\ \cite{berry2019qubitization}.
\end{flushleft}
\end{table}

\section{Derivation of the double factorized Hamiltonian in terms of Pauli operators}\label{app:technical_proofs}
Inserting \eqref{eq:df_two_body_integrals} into \eqref{eq:2nd_quantized_h} we have
\begin{equation}
    H=E_c+\sum_{pq}(p|\hat{\kappa}|q)E^+_{pq}+ \frac{1}{2} \sum_{tkl} Z^t_{kl} U^tE^+_{kk}E^+_{ll}{U^\dagger}^t 
\end{equation}
with 
\begin{equation}
    (p|\hat\kappa|q) = (p|\hat{h}_c|q)-\frac{1}{2}\sum_r(pr|qr) .
\end{equation}
Using the Jordan Wigner mapping we can write 
\begin{equation}
    E^+_{kk}=I-\frac{\hat{Z}_k+\hat{Z}_{\bar k}}{2} \label{eq:e_plus_def}
\end{equation} 
and using the following identity
\begin{equation}
    E^+_{kk}E^+_{ll}=-I+E^+_{kk}+E^+_{ll}+\frac{1}{4}(\hat{Z}_k+\hat{Z}_{\bar k})(\hat{Z}_l+\hat{Z}_{\bar l})    
\end{equation}
yields
\begin{equation}
    H=E_c-\frac{1}{2}\sum_{tkl}Z^t_{kl} + \sum_{pq}(p|\hat{\kappa}|q)E^+_{pq} \\
    + \sum_{tk}\sum_l Z^t_{kl}U^tE^+_{ll}{U^t}^\dagger + \frac{1}{8}\sum_{tkl}Z^t_{kl}U^t(\hat{Z}_k+\hat{Z}_{\bar k})(\hat{Z}_l+\hat{Z}_{\bar l}){U^t}^\dagger . \label{eq:ungly_h}
\end{equation}
Our aim is now to sort terms according to whether they contain an even or odd number of $\hat Z$ operators to arrive at expression \eqref{eq:df_hamiltonian}.
Using again \eqref{eq:df_two_body_integrals} we can rewrite $\sum_{tk}\sum_l Z^t_{kl}U^tE_{ll}{U^t}^\dagger=\sum_{pq}\sum_{tk}\sum_l U^t_{pl}U^t_{ql}Z^t_{kl}E^+_{pq}$ and hence we have
\begin{align}
    (pq|rr)&=\sum_{tkl}U^t_{pk}U^t_{qk}Z^t_{kl}{U^t_{rl}}^2\\
    \implies \sum_{r}(pq|rr)&=\sum_{tkl}U^t_{pk}U^t_{qk}Z^t_{kl} .
\end{align}
Using these expressions and the shorthand $\mathcal{F}_{pq}$ defined in \eqref{eq:def_mathcal_F} we can rewrite \eqref{eq:ungly_h} to read
\begin{align}
    H&=E_c-\frac{1}{2}\sum_{tkl}Z^t_{kl} + \sum_{pq}((p|\hat{\kappa}|q)+\sum_{tk}\sum_l U^t_{pl}U^t_{ql}Z^t_{kl})E^+_{pq} + \frac{1}{8}\sum_{tkl}Z^t_{kl}U^t(\hat{Z}_k+\hat{Z}_{\bar k})(\hat{Z}_l+\hat{Z}_{\bar l}){U^t}^\dagger\\
    &=E_c-\frac{1}{2}\sum_{tkl}Z^t_{kl} + \sum_{pq}((p|\hat{\kappa}|q)+\sum_{r}(pq|rr))E^+_{pq} + \frac{1}{8}\sum_{tkl}Z^t_{kl}U^t(\hat{Z}_k+\hat{Z}_{\bar k})(\hat{Z}_l+\hat{Z}_{\bar l}){U^t}^\dagger\\
    &=E_c-\frac{1}{2}\sum_{tkl}Z^t_{kl} + \sum_{pq}\mathcal{F}_{pq}E^+_{pq} + \frac{1}{8}\sum_{tkl}Z^t_{kl}U^t(\hat{Z}_k+\hat{Z}_{\bar k})(\hat{Z}_l+\hat{Z}_{\bar l}){U^t}^\dagger\\
    &=E_c-\frac{1}{2}\sum_{tkl}Z^t_{kl} + \sum_{k}\mathcal{F}^\varnothing_{k}{{U^\varnothing}^\dagger}E^+_{k}{U^\varnothing} + \frac{1}{8}\sum_{tkl}Z^t_{kl}U^t(\hat{Z}_k+\hat{Z}_{\bar k})(\hat{Z}_l+\hat{Z}_{\bar l}){U^t}^\dagger
\end{align}
Replacing $E^+_k$ according to \eqref{eq:e_plus_def} and pulling out the $k=l$ terms from the last sum we arrive at
\begin{align}
    H &=E_c-\frac{1}{2}\sum_{tkl}Z^t_{kl} + \sum_p\mathcal{F}^\varnothing_{p} +\frac{1}{8}\sum_{tk}Z^t_{kk} +\sum_{k}\mathcal{F}^\varnothing_{k}{{U^\varnothing}^\dagger}(Z_k+Z_{\bar k}){U^\varnothing} + \\
    &\frac{1}{8}\sum_{tkl}Z^t_{kl}U^t\left(\hat{Z}_{k}\hat{Z}_{l}-\delta_{kl}+\hat{Z}_{k}\hat{Z}_{\bar{l}}+\hat{Z}_{\bar{k}}\hat{Z}_{l}+\hat{Z}_{\bar{k}}\hat{Z}_{\bar{l}}-\delta_{\bar{k}\bar{l}}\right){U^t}^\dagger .
\end{align}
Using \ref{offset:1} and \ref{offset:2}, we write the total offset in the above equation as:
\begin{align}
    \mathcal{E} =& E_c-\frac{1}{2}\sum_{tkl}Z^t_{kl} + \sum_p\mathcal{F}^\varnothing_{p} +\frac{1}{8}\sum_{tk}Z^t_{kk}\\
    =& E_c-\frac{1}{2}\sum_{pq}(pp|qq) + \sum_p\mathcal{F}^\varnothing_{p} +\frac{1}{8}\sum_{pq}(pq|pq)
\end{align}
Since $trace(\mathcal{F}_{pq})=\sum_p\mathcal{F}^\varnothing_{p}$, we have:
\begin{align}
    \sum_p\mathcal{F}^\varnothing_{p}=&trace(\mathcal{F}_{pq})\\
    =&\sum_p\mathcal{F}_{pp}=\sum_p(p|\hat{\kappa}|p)+\sum_{r}(pp|rr)\\
    =&\sum_p((p|\hat{h}_c|p)-\frac{1}{2}\sum_r(pr|pr) + \sum_{r}(pp|rr))
\end{align}
As a result we have:
\begin{align}
    \mathcal{E}=& E_c-\frac{1}{2}\sum_{pq}(pp|qq) + \sum_p((p|\hat{h}_c|p)-\frac{1}{2}\sum_r(pr|pr) + \sum_{r}(pp|rr)) +\frac{1}{8}\sum_{pq}(pq|pq)\\
    =& E_c-\frac{1}{2}\sum_{pq}(pp|qq) + \sum_p(p|\hat{h}_c|p)-\frac{1}{2}\sum_{pr}(pr|pr) + \sum_{pr}(pp|rr)) +\frac{1}{8}\sum_{pq}(pq|pq)\\
    =& E_c+\frac{1}{2}\sum_{pq}(pp|qq) + \sum_p(p|\hat{h}_c|p)-\frac{1}{4}\sum_{pr}(pr|pr)
\end{align}

Using \eqref{eq:df_two_body_integrals} again we can identify the the second term in the last expression to be
\begin{align}
    \label{offset:1}
    (pp|qq)&=\sum_{tkl}{U^t_{pk}}^2 Z^t_{kl}{U^t_{ql}}^2\\
    \implies \sum_{pq}(pp|qq)&=\sum_{tkl}Z^t_{kl} .
\end{align}
Similarly 

Moreover, since for $k=l \implies \hat{Z}_k\hat{Z}l=I$, we have an extra offset of $\sum_{tk}Z^t_{kl}$:
\begin{align}
    \label{offset:2}
    (pq|pq)&=\sum_{tkl}U^t_{pk}U^t_{qk}Z^t_{kl}U^t_{pl}U^t_{ql}\\
    \implies \sum_{pq}(pq|pq)&=\sum_{tkl}\sum_pU^t_{pk}U^t_{pl}Z^t_{kl}\sum_qU^t_{qk}U^t_{ql}\\
    &=\sum_{tkl} \delta_{kl}Z^t_{kl}\delta_{kl}\\
    &=\sum_{tk}Z^t_{kk}
\end{align}

\section{Comparison of RC-DF and FFF}\label{app:comparison_with_fff}
The Fluid Fermionic Fragments (FFF) method is based on the fact that some contributions to the Hamiltonian can be moved back and fourth freely between the second and third term of the electronic structure Hamiltonian as written in \eqref{eq:2nd_quantized_h}.
In the fermionic picture, these "fluid" parts of the Hamiltonian correspond to terms that are quadratic in the creation and annihilation operators (see (9) an (10) in Ref~\cite{choi2023fluid}) and which, after diagonalization of the quadratic part contribute to the terms proportional to particle number operators, and which under Jordan Wigner yield Pauli $\hat Z$ operators.

Here we present the FFF method in the qubit picture. To that end, starting from \eqref{eq:2nd_quantized_h}, we first factorize the $(pq|rs)$ part of the Hamiltonian only, which yields:
\begin{align}
\hat{H} &= E_0+\sum_{pq}(p|\hat{\kappa}|q)E^+_{pq} + \frac{1}{2}\sum_{pqrs}(pq|rs)E^+_{pq}E^+_{rs}\\
        &= E_0+\sum_{pq}(p|\hat{\kappa}|q)E^+_{pq} + \frac{1}{2}\sum_tU^t(\sum_{kl}Z^t_{kl}E^+_{kk}E^+_{ll})U^{t\dagger} \\
        &=E_0+\sum_{pq}(p|\hat{\kappa}|q)E^+_{pq} + \frac{1}{2}\sum_tU^t(\sum_{kl}Z^t_{kl}(-I+E^+_{kk}+E^+_{ll}+\frac{1}{4}(\hat Z_k+\hat Z_{\bar k})(\hat Z_l+\hat Z_{\bar l}))U^{t\dagger}\\
        &=E_0-\frac{1}{2}\sum_{tkl}Z^t_{kl}+\sum_{pq}(p|\hat{\kappa}|q)E^+_{pq}+\sum_tU^t(\sum_{k}(\sum_lZ^t_{kl})E^+_{kk})U^{t\dagger}\\
        &\quad +\frac{1}{8}\sum_tU^t(\sum_{kl}Z^t_{kl}((\hat Z_k+\hat Z_{\bar k})(\hat Z_l+\hat Z_{\bar l}))U^{t\dagger}
\end{align}
Now we can add and subtract terms of the form $U^t(\sum_{k} c^t_k E^+_{kk})U^{t\dagger} = \sum_{k} c^t_k \sum_{pq} u^t_{pk} u^t_{qk} E^+_{pq}$ with $c^t_k$ the FFF coefficients to obtain
\begin{align}
        \hat{H} &= E_0-\frac{1}{2}\sum_{tkl}Z^t_{kl}+\sum_{pq}(p|\hat{\kappa}|q)E^+_{pq}- \sum_tU^t(\sum_{k}c^t_kE^+_{kk})U^{t\dagger} + \sum_tU^t(\sum_{k}(\sum_lZ^t_{kl}+c^t_k)E^+_{kk})U^{t\dagger}\\
        &\quad +\frac{1}{8}\sum_tU^t(\sum_{kl}Z^t_{kl}((\hat Z_k+\hat Z_{\bar k})(\hat Z_l+\hat Z_{\bar l}))U^{t\dagger}\\
        &=E_0-\frac{1}{2}\sum_{tkl}Z^t_{kl}+\sum_{pq}((p|\hat{\kappa}|q)-\sum_t\sum_ku^t_{pk}u^t_{qk}c^t_k)E^+_{pq}+ \sum_tU^t(\sum_{k}(\sum_lZ^t_{kl}+c^t_k)E^+_{kk})U^{t\dagger}\\
        &\quad +\frac{1}{8}\sum_tU^t(\sum_{kl}Z^t_{kl}((\hat Z_k+\hat Z_{\bar k})(\hat Z_l+\hat Z_{\bar l}))U^{t\dagger}\\
        &=\mathcal{E}'+ {U'}^{\varnothing\dagger} \sum_k\mathcal{C}_k^\varnothing \hat Z_k {U'}^{\varnothing} -\frac{1}{2}\sum_tU^t(\sum_{k}(\sum_lZ^t_{kl}+c^t_k)\hat Z_k)U^{t\dagger}\\
        &\quad +\frac{1}{8}\sum_tU^t(\sum_{kl}Z^t_{kl}(\hat{Z}_{k}\hat{Z}_{l}-\delta_{kl}+\hat{Z}_{k}\hat{Z}_{\bar{l}}+\hat{Z}_{\bar{k}}\hat{Z}_{l}+\hat{Z}_{\bar{k}}\hat{Z}_{\bar{l}}-\delta_{\bar{k}\bar{l}})U^{t\dagger} \label{eq:separate}
\end{align}
with 
\begin{align}
    \mathcal{E}' &= E_0-\frac{1}{2}\sum_{tkl}Z^t_{kl} + \sum_k\mathcal{C}_{kk}+\sum_{tk}(\sum_lZ^t_{kl}+c^t_k) + \frac{1}{4}\sum_{tk}Z^t_{kk} \\
    \mathcal{C}_{pq} &= (p|\hat{\kappa}|q)-\sum_t\sum_ku^t_{pk}u^t_{qk}c^t_k
\end{align}
and ${U'}^{\varnothing}$ and $\mathcal{C}_{k}^\varnothing$ are the diagonalizing unitaries and eigenvalues of $\mathcal{C}_{pq}$.

The case all $c^t_k = 0$ corresponds to how X-DF was introduced in \cite{parrish2019quantum} and this was taken as the prior art benchmark in \cite{choi2023fluid}.
The case
\begin{equation} \label{eq:minimize_fff_coeffs}
    c^t_k = -\sum_l Z^t_{kl} 
\end{equation}
corresponds to the way we wrote the Hamiltonian in \eqref{eq:df_hamiltonian}, with no single qubit $\hat Z$ contributions in the two body leafs.
It turns out that neither of these choices is optimal with respect to variance and thus shot count and optimizing the $c^t_k$ can further yield improvements,

Optimization can be done with a gradient based optimizer and in \cite{choi2023fluid} as well as here we used \href{https://docs.scipy.org/doc/scipy/reference/generated/scipy.optimize.fmin_l_bfgs_b.html}{LBFGSB} as implemented in scipy.
The number of shots is then optimized, via a proxy state, using a nested loop where in each iteration the coefficients $c^t_k$ are updated using the partial derivative at fixed shot distribution then the shots are optimally distributed according to the variances computed with the new $c^t_k$ in the proxy state.
For simplicity and better comparability with the results form \cite{choi2023fluid} we use the exact ground state as the proxy state.
This is not efficient but \cite{choi2023fluid} found little difference between using the real ground state and an approximate proxy state for which variances can be computed efficiently.
We compute the needed derivatives by means of a fully auto-differentiable code that computes the variances as a function of the $c^t_k$ \cite{pennylane, jax, maclaurin2015autograd}.

For some cases for which we have performed simulations (see Figure~\ref{fig:rcdf_fff}) we find that all $c^t_k = 0$ and/or X-DF with the $c^t_k$ corresponding to \eqref{eq:df_hamiltonian} is a local minimum and hence gradient based optimization of the $c^t_k$ does not work.
We consistently found good final shot counts from random uniformly distributed within $[0,1[$ initializations.
Alternatively one can initialize from coefficients according to \eqref{eq:minimize_fff_coeffs} which lead to faster convergence but seems to yield the same final or very similar shot budgets.
Overall we find that combining RC-Df with FFF yields the lowest shot budgets.
The term mapping used in \eqref{eq:df_hamiltonian} is significantly better than choosing all $c^t_k = 0$ and RC-DF (with and without FFF) consistently outperforms X-DF.

\begin{figure}
    \centering
    \begin{subfigure}[b]{0.49\textwidth}
         \centering
         \includegraphics[width=\textwidth]{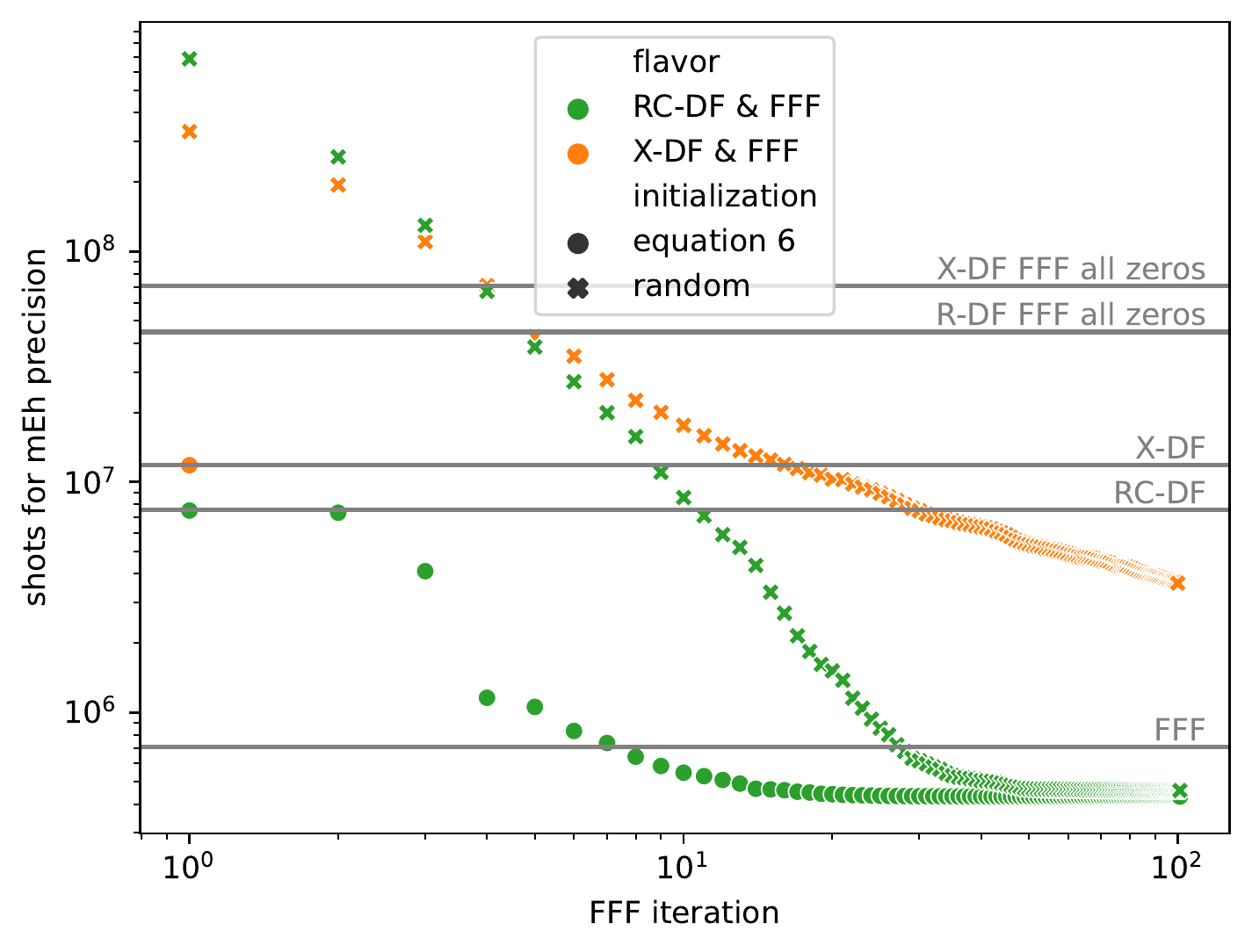}
         \caption{}
         \label{fig:y equals x}
     \end{subfigure}
     \hfill
     \begin{subfigure}[b]{0.49\textwidth}
         \centering
         \includegraphics[width=\textwidth]{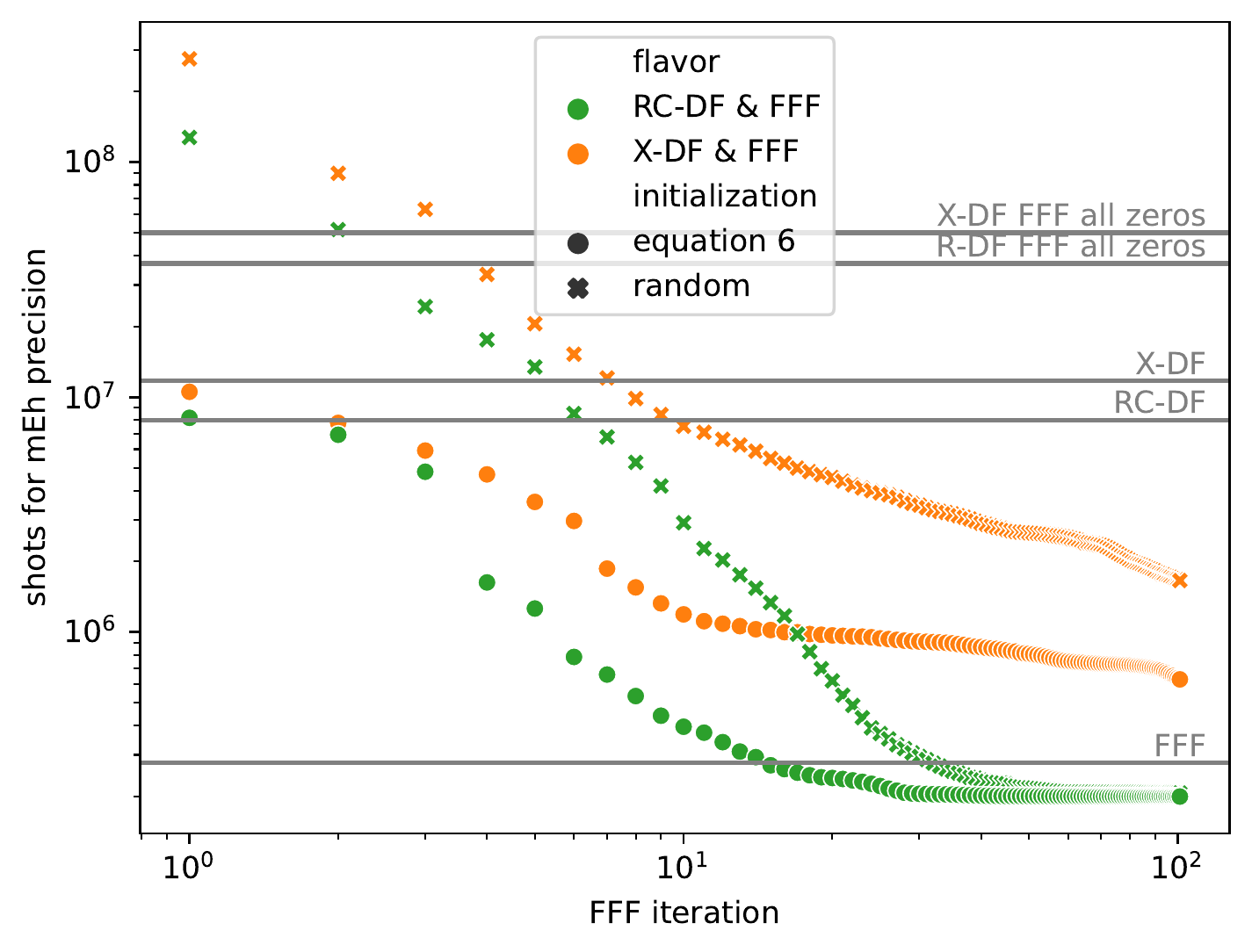}
         \caption{}
         \label{fig:three sin x}
     \end{subfigure}
     \begin{subfigure}[b]{0.49\textwidth}
         \centering
         \includegraphics[width=\textwidth]{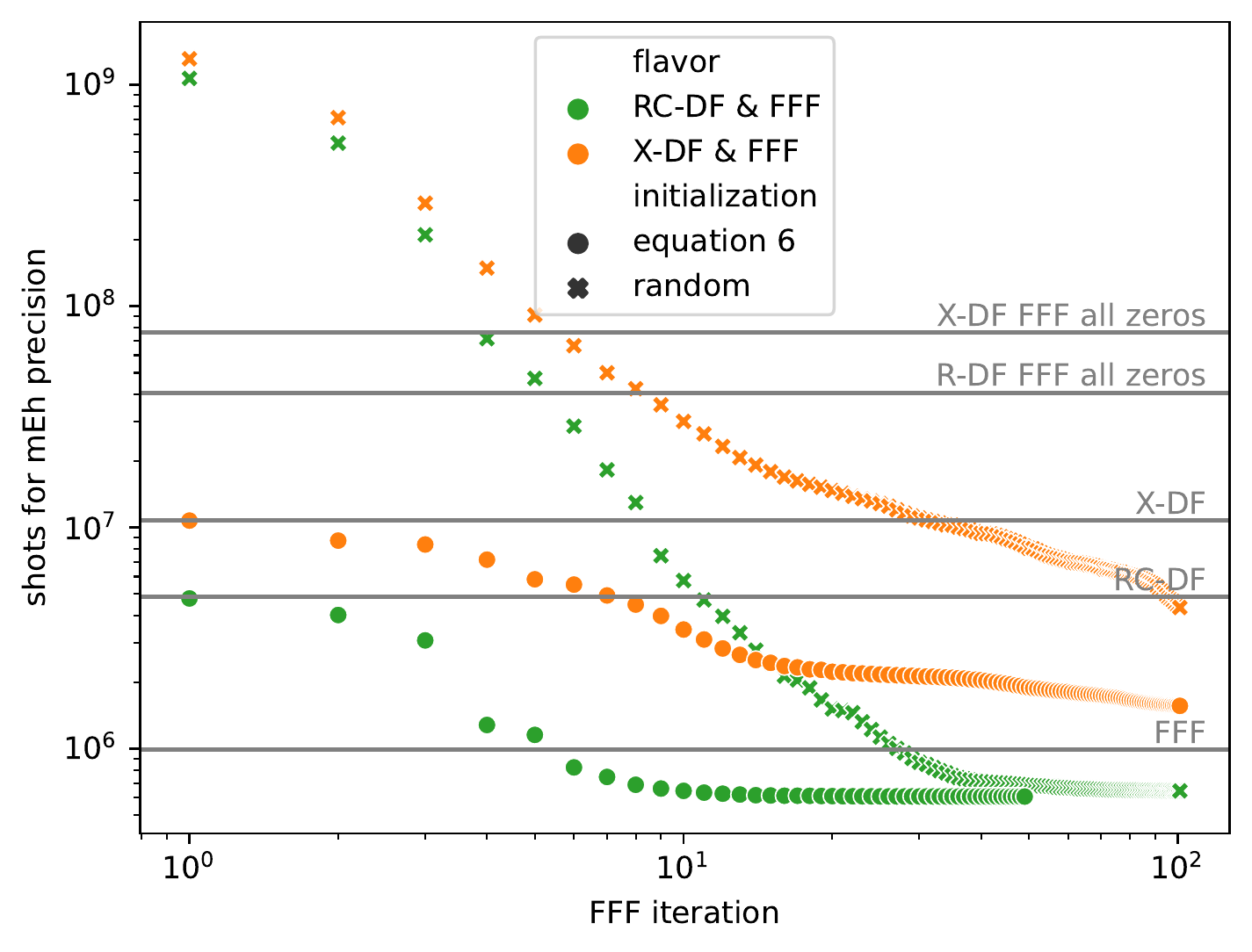}
         \caption{}
     \end{subfigure}
     \begin{subfigure}[b]{0.49\textwidth}
         \centering
         \includegraphics[width=\textwidth]{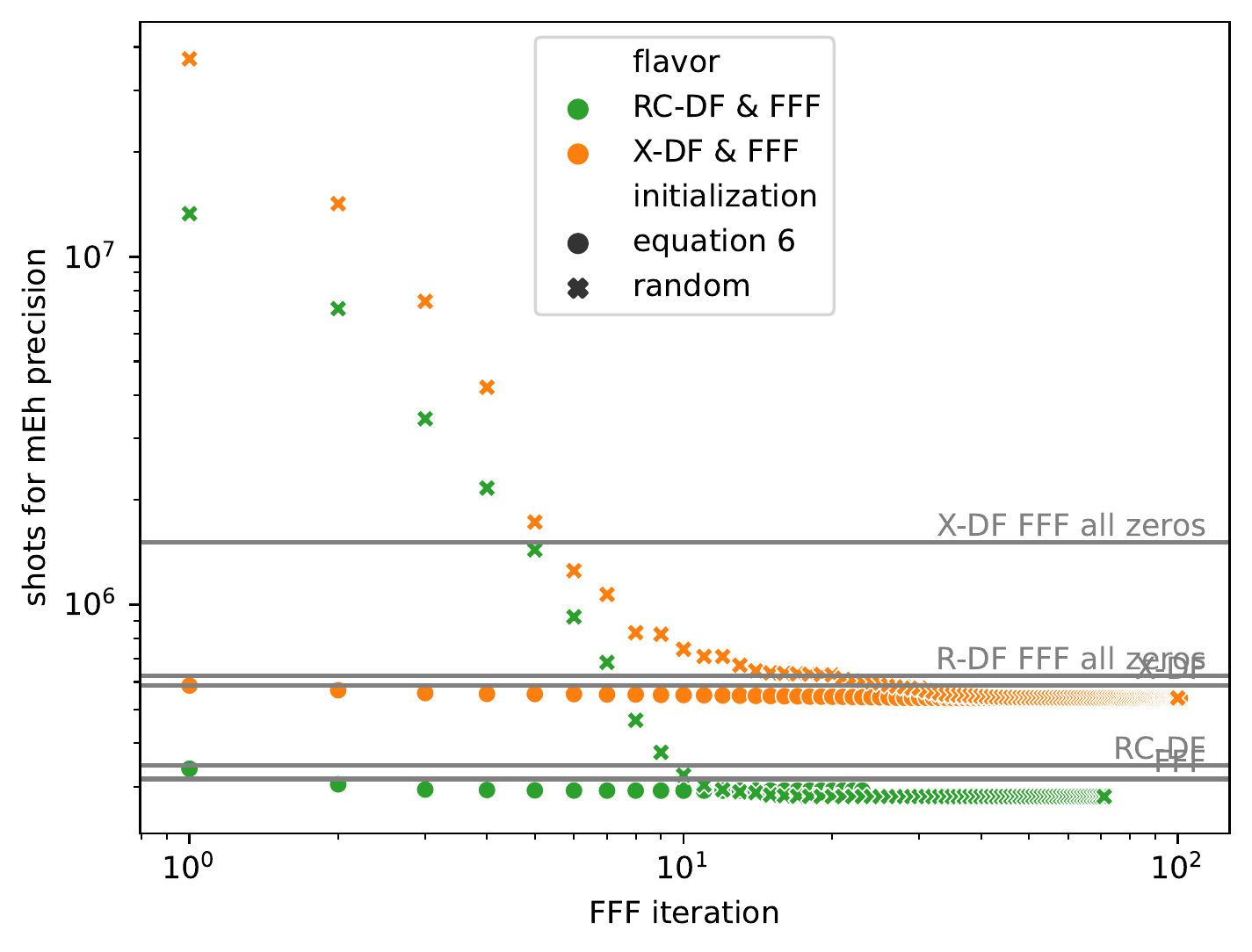}
         \caption{}
     \end{subfigure}
     \hfill
    \caption{Number of shots to reach mili Hartree precision with optimal shot distribution for combinations of X-DF and RC-DF with the fluid fermionic fragments (FFF) method for (a) \ce{H2O} with $n_t=24$ and (10e, 7o) active space, (b) \ce{HF} with $n_t=16$ and (10e, 6o) active space, (c) \ce{NH3} with $n_t=28$ and (10e, 8o) active space and (d) \ce{H4} with $n_t=10$ and (4e, 4o) active space (all with RHF orbitals in the STO-3G basis).
    The gray horizontal lines are the number of shots when the Hamiltonians are measured as written in \eqref{eq:separate} with all FFF coefficients $c^t_k$ equal to zero (X-DF/RC-DF all zero) or as written in \eqref{eq:df_hamiltonian} (X-DF/RC-DF) and the best original results of FFF from \cite{choi2023fluid}.
    The examples (a) to (c) were specifically picked because the gap between the plain RC-DF shot budget and the FFF shot budget from \cite{choi2023fluid} were large.
    For other cases, plain RC-DF already yields similar results to FFF.
    The crosses/dots show how the number of shots decreases during optimization of the $c^t_k$ FFF coefficients (with the true ground state taken as proxy state for simplicity) in \eqref{eq:separate} after initializing them randomly according to a uniform distribution within [0,1[/such that the initial Hamiltonian coincides with \eqref{eq:df_hamiltonian}.}
    \label{fig:rcdf_fff}
\end{figure}

\end{appendices}

\end{document}